\theoremstyle{definition}
\newtheorem{theorem}{Theorem}
\newtheorem{lemma}{Lemma}
\newtheorem{corollary}{Corollary}
\newtheorem{proposition}{Proposition}
\newtheorem{definition}{Definition}
\newtheorem{claim}{Claim}
\newtheorem{remark}{Remark}
\newenvironment{itemize*}%
  {\vspace{-.5em}\begin{itemize}%
    \setlength{\itemsep}{0pt}%
    \setlength{\parskip}{0pt}}%
  {\end{itemize}\vspace{-.5em}}
\newenvironment{enumerate*}%
  {\begin{enumerate}%
    \setlength{\itemsep}{0pt}%
    \setlength{\parskip}{0pt}}%
  {\end{enumerate}}
\newcommand{\defstyle}[1]{\textbf{#1}}
\newcommand{\rowa}{r}
\newcommand{\cola}{c}
\newcommand{\rowb}{s}
\newcommand{\colb}{d}
\newcommand{\size}[1]{\# #1}
\newcommand{\len}[1]{|#1|}
\newcommand{\ie}{\textit{i.e.}}
\newcommand{\cf}{\textit{cf.}}
\newcommand{\eg}{\textit{e.g.}}
\newcommand{\set}[1]{\{#1\}}
\newcommand{\A}{\mathbb{A}}
\newcommand{\N}{\mathds{N}}
\newcommand{\Nz}{\mathds{N}_0}
\newcommand{\nexptime}{\textrm{\sc NExpTime}\xspace}
\newcommand{\np}{\textrm{\sc NP}\xspace}
\newcommand{\FOtwo}{\textup{FO}^{\textup{2}}}
\newcommand{\EMSOtwo}{\textup{EMSO}^2}
\newcommand{\suca}{{\rightarrow}}
\newcommand{\sucb}{{\downarrow}}
\title{Satisfiability for two-variable logic  \\with  two successor relations on finite linear orders\tnoteref{t1}} 
\author{Diego Figueira} 
\ead{dfigueir@inf.ed.ac.uk}
\address{University of Edinburgh}
\begin{document}

\begin{abstract}
We study the finite satisfiability problem for first order logic with two variables and two binary relations, corresponding to the induced successor relations of two finite linear orders. We show that the problem is decidable in \nexptime.
 \end{abstract}

\maketitle

\section{Introduction}

First-order logic with two variables (henceforth denoted by $\FOtwo$) is of importance in computer science due to its decidable satisfiability problem (contrary to fragments of FO with 3 or more variables), and since it has connections with many formalisms, such as modal, temporal or description logics.
Many fragments of $\FOtwo$ have been studied because of this, especially in the presence of linear orders or equivalence relations. 
There are, still, a few relevant basic problems that remain open, and our work aims at expanding the classification of $\FOtwo$ in the presence of linear orders. In this setting, linear orders are related with temporal logics, but it is also applicable in other scenarios, like in databases or description logics.

\smallskip

We study the two variable fragment of first-order logic with two variables and two successor relations on two finite linear orders.  We show that the problem is decidable in \nexptime. This bound is optimal, since the problem is \nexptime-hard \cite{EtessamiVW02}. 
This logic has been previously claimed to be decidable in 2\nexptime in \cite{Manuel10}, but the proof was flawed.\footnote{In fact, according to its author, the proof of Lemma~4 in \cite{Manuel10} is wrong~\cite{ManuelPersonal}, and there does not appear to be an easy way of fixing it. This is a key lemma employed to obtain the decidability results contained in \cite{Manuel10}. (Of course, this does not affect the undecidability results contained in \cite{Manuel10}.) Here we adopt a different strategy to prove decidability.} 
As a corollary of the results from the report~\cite{manuelzeumeUNP}, this logic is shown to be decidable with a non-primitive-recursive algorithm.\footnote{By this we mean an algorithm whose time or space is not bounded by any primitive-recursive function.}
Our result also trivially extends to the satisfiability of existential monadic second order logic with two variables ($\EMSOtwo$) and two successor relations on finite linear orders.

This work focuses on the \emph{finite} satisfiability problem and hence all the results discussed next are relative to finite structures.
$\FOtwo$ is a well-known decidable fragment of first-order logic. Over arbitrary relations, it is known to be decidable~\cite{Mor75}, \nexptime-complete~\cite{GradelKV97}. $\FOtwo$ over words (\ie, with two relations: a successor relation over a finite linear order, and its transitive closure) is \nexptime-complete \cite{EtessamiVW02}. The satisfiability problem was shown to be undecidable:
 in the presence of two transitive relations (even without equality) the satisfiability problem is undecidable \cite{Kieronski05};
  in the presence of one transitive relation and one equivalence relation \cite{KieronskiT09};
 in the presence of three linear orders \cite{Kieronski11}; or
 in the presence of three equivalence relations \cite{KieronskiO05}.
However, if it has only two equivalence relations it is decidable \cite{KieronskiT09}. Over words with one equivalence relation it is decidable \cite{BDMSS10:tocl}. If it only has a transitive closure over a finite linear order and an equivalence relation, then it is \nexptime-complete \cite{BDMSS10:tocl}. If it only has a successor relation over a finite linear order and an equivalence relation, it is in 2\nexptime \cite{BDMSS10:tocl}. On trees, with only successor relations (\ie, the \emph{child} and  \emph{next sibling} relations) and an equivalence relation it is decidable in 3\nexptime \cite{BMSS09:xml:jacm}.

There have also been works in the presence of a linear order and a linear preorder 
%\cite{SZ10,SZlmcs12,manuelzeumeUNP}
\cite{SZlmcs12,manuelzeumeUNP}. In the presence of two finite linear orders, if there is a successor and its transitive closure over one linear order, and a successor over another linear order, it is decidable, and as hard as reachability of VAS according to the report \cite{manuelzeumeUNP}. If there are only two successors, it is known to be \nexptime-hard. Indeed, it is already \nexptime-hard even when no binary relations are present \cite{EtessamiVW02}. Here, we show that it is indeed \nexptime-complete. In fact, it sits in the same complexity class as $\FOtwo$ with just one successor relation on a linear order.

\section{Preliminaries}

Let $\N = \set{1, 2, \dotsc}$, $\Nz = \N \cup \set 0$, and for every $n,m \in \N$,  $[n,m] = \set{i\in \N \mid n \leq i \leq m}$, $[n] = [1,n]$.  Given a function $f : A \to B$ and a set $A' \subseteq A$, by $f|_{A'}$ we denote $f$ restricted to the elements of $A'$, and by $f[a \mapsto b] : A \cup \set a \to B \cup \set b$ we denote the function where $f[a \mapsto b] (a) = b$ and $f[ a \mapsto b] (a') = f(a')$ for all $a' \in A \setminus \set a$. We write $[a \mapsto b]$ to denote $\iota\,[a \mapsto b]$, where $\iota$ is the identity function, and $[a \mapsto b, a' \mapsto b']$ to denote $[a \mapsto b][a' \mapsto b']$.
  For any number $n$, we denote by $|n|$ its absolute value. We write $\size{S}$ to denote the number of elements of a set $S$. Given a string $w \in A^*$, we write $\len{w}$ to denote the length of $w$, 
 $w[i,j]$ to denote the subword of $w$ restricted to positions $[i,j]$, and $w[i]$  to denote $w[i,i]$,
for any $1 \leq i \leq j \leq \len{w}$.

\subsection{Permutations}

First-order structures with $n$ elements and two linear orders can be naturally represented as a permutation on $[n]$ with $n$ valuations.\footnote{We do not claim that we are the first to use this encoding, it may have been used before.} This representation will prove useful in the proofs that follow. 
A permutation of $[n]$ is represented as a set $\pi \subseteq [n] \times [n]$ so that for every $i \in [n]$, $\pi$ has exactly one pair with $i$ in the first component, and exactly one pair with $i$ in the second component. We will normally use the symbols $(\rowa,\cola), (\rowb,\colb)$ to denote elements of a permutation.

\begin{figure*}
  \centering
  \includegraphics[width=\textwidth]{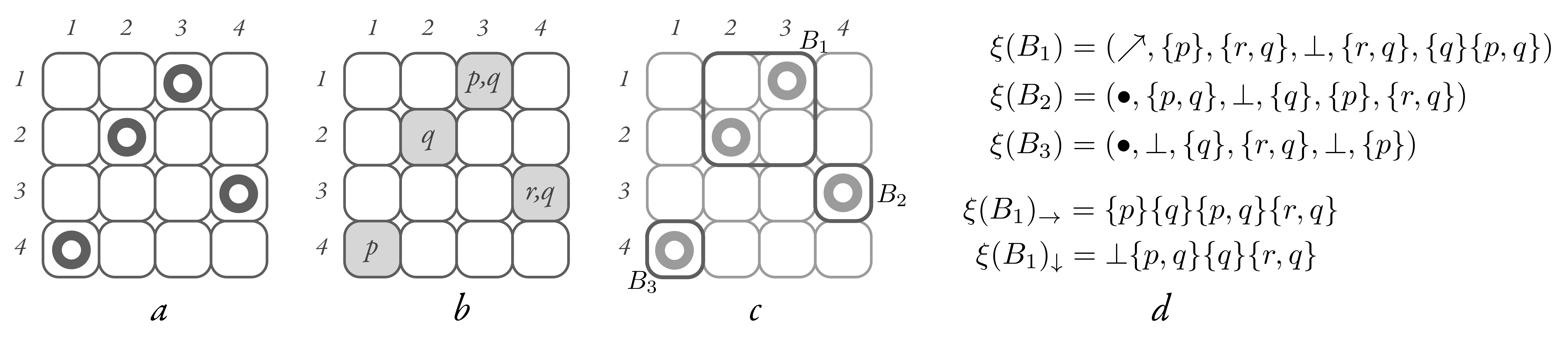}
  \caption{Representation of:
 ($a$) a $4$-permutation $\pi = \set{(4,1), (2,2), (1,3), (3,4)}$; ($b$) a valued permutation $(\pi,\sigma)$ where, \eg,  $\sigma(1,3) = \set{ p, q }$; ($c$) the maximal blocks of $\pi$; $(d)$ the fingerprints of $(\pi,\sigma)$.}
  \label{fig:valperm}
\end{figure*}

\newcommand{\ntype}[3]{[#2,#3]_{#1}}
Formally, given $n \in \N$, we say that $\pi \subseteq [n]\times[n]$ is an \defstyle{\textbf{\textit{n}}-permutation} if for every $k \in [n]$ we have $\size{\set{\cola \mid (k,\cola) \in \pi}} = \size{\set{\rowa \mid (\rowa,k) \in \pi}} = 1$. We say that $\pi$ is a permutation if it is an $n$-permutation for some $n$. Given a permutation $\pi$ and $(\rowa,\cola), (\rowa',\cola') \in \pi$ we say that the \defstyle{neighborhood type} of $(\rowa,\cola), (\rowa',\cola')$ in $\pi$ is an element 
$t \in \set{\bullet,\nearrow,\uparrow,\nwarrow,\leftarrow,\swarrow,\downarrow,\searrow,\rightarrow,\infty}$ so that:
$t \in \set{\searrow,\downarrow,\swarrow}$ if{f}
$\rowa' - \rowa = 1$,
$t \in \set{\nearrow,\uparrow,\nwarrow}$ if{f}
$\rowa' - \rowa = -1$,  $t \in \set{\nearrow,\rightarrow,\searrow}$ if{f}
$\cola' - \cola = 1$, $t \in \set{\nwarrow,\leftarrow,\swarrow}$ if{f}
$\rowa' - \rowa = -1$, and $t = \bullet$ if{f} $(\rowa,\cola) = (\rowa',\cola')$. We denote it with $\ntype{\pi}{(\rowa,\cola)}{(\rowa',\cola')}$. Figure~\ref{fig:valperm}-$a$ contains a graphical representation of a $4$-permutation, where $\ntype{\pi}{(2,2)}{(3,4)}=\downarrow$, $\ntype{\pi}{(2,2)}{(1,3)}=\nearrow$, $\ntype{\pi}{(1,3)}{(4,1)}=\infty$.

\newcommand{\V}{\mathbb{V}}
\newcommand{\nth}{\raisebox{1.3px}{\hspace{-0.4px}\text{\underline{\hspace{1.1ex}}\hspace{.15ex}}}}
Let us fix $\V$ to be an enumerable set of propositional letters. A \defstyle{valued permutation} is a pair $(\pi,\sigma)$ consisting of a permutation $\pi$ and a function $\sigma : \pi \to 2^\V$ that assigns a set of propositional letters to each  element of $\pi$. We say that $\sigma(r,c) \subseteq \V$ is the \defstyle{valuation} of $(r,c)$ in $(\pi,\sigma)$. Since for every $\rowa$ [resp.\ for every $\cola$] there is only one $\cola$ [resp.\ only one $\rowa$] such that $(\rowa,\cola) \in \pi$ we use the notation $\sigma(\rowa,\nth)$ [resp.\ $\sigma(\nth,\cola)$] to denote $\sigma(\rowa,\cola)$ for the only $\cola$ [resp.\ only $\rowa$] such that $(\rowa,\cola) \in \pi$. For example, the valued permutation of Figure~\ref{fig:valperm}-$b$ is such that $\sigma(3, \nth) = \sigma(3,4) = \set{r,q}$. A valued permutation can be seen as a finite first-order structure with two linear orders, where the permutation element $(\rowa,\cola)$ represents the $\rowa$-th element in the first linear order, and the $\cola$-th element in the second linear order, and its valuation is $\sigma(\rowa,\cola)$. Likewise, any finite first-order structure with two linear orders can be represented by a valued permutation.

For convenience in our proofs, we also represent an $n$-permutations with any set $\pi' \subseteq S \times T$ with $S, T \subseteq \N$, $\size S = \size T =n$ such that for every $(\cola,\rowa) \in S \times T$, $\size{\set{\cola' \mid (\rowa,\cola') \in \pi' }} = \size{\set{\rowa' \mid (\rowa',\cola) \in \pi' }} = 1 $. In this case we say that $\pi'$ is a \defstyle{permutation over $S \times T$}. Note that for every $n$-permutation $\pi'$ over $S \times T$ there is an $n$-permutation $\pi$ and a bijection $f : \pi' \to \pi$ that preserves the order of the elements: for all $f(\rowa,\cola) = (\rowa',\cola')$, $f(\rowb,\colb)=(\rowb',\colb')$, we have that $\rowa \leq \rowb$ if{f} $\rowa' \leq \rowb'$ and $\cola \leq \colb$ if{f} $\cola' \leq \colb'$. 
%In this case we write $\pi' \sim \pi$. 
For any $(\rowa,\cola), (\rowa', \cola') \in \pi'$ we define $\ntype{\pi'}{(\rowa,\cola)}{(\rowa',\cola')}=\ntype{\pi}{f(\rowa,\cola)}{f(\rowa',\cola')}$.
%The equivalence relation is extended in the obvious way to valued permutations (notation: $(\pi,\sigma) \sim (\pi',\sigma')$).

Valued permutations represent, precisely, first-order finite structures with two linear orders. 
We define then the semantics of $\FOtwo(\suca, \sucb)$ over valued permutations.

\subsection{$\FOtwo$ with two linear orders}
We define $\FOtwo$ on finite structures with the induced successor relations of two finite linear orders, that we denote by $\FOtwo(\suca, \sucb)$. 
The atoms of $\FOtwo(\suca,\sucb)$ are: $p(a)$, $a \suca b$, and $a \sucb b$, for every $a, b \in \set{x,y}$ and every propositional letter $p \in \V$. If $\varphi, \psi$ are formulas of $\FOtwo$, so are $\exists a . \varphi$, $\forall a . \varphi$, $\lnot \varphi$, $\varphi \land \psi$, $\varphi \lor \psi$, where $a \in \set{x,y}$.
\newcommand{\vars}[1]{\+V_{#1}}
For any $\varphi \in \FOtwo(\suca, \sucb)$, let $\vars \varphi \subseteq \V$ be the set of all propositional variables occurring in $\varphi$.

\paragraph*{Semantics}
We define $\FOtwo(\suca,\sucb)$ on valued permutations. 
 The semantics are as expected, we give only some cases to fix notation. Here, $(\pi,\sigma)$ is a valued permutation and $\mu$ is a partial function $\mu : \set{x,y} \to \pi$.
{\small
   \begin{alignat*}{2}
     (\pi,\sigma) &\models_{\mu} p(x) && ~~\text{ if } p \in \sigma(\mu(x)) \\
     (\pi,\sigma) &\models_{\mu} \exists x . \varphi && ~~\text{ if for some $(\rowa,\cola) \in \pi$ we have } \\
&&&~~~(\pi,\sigma) \models_{\mu[x \mapsto (\rowa,\cola)]} \varphi\\
     (\pi,\sigma) &\models_{\mu} (x \suca y) && ~~\text{ if
       for some $(\rowa,\cola), (\rowa',\cola+1) \in \pi$ we} \\
&&&~~~\text{have $\mu(x) = (\rowa,\cola)$, $\mu(y) = (\rowa',\cola+1)$} \\
     (\pi,\sigma) &\models_{\mu} (x \sucb y) && ~~\text{ if for some
       $(\rowa,\cola), (\rowa+1,\cola') \in \pi$ we}\\
&&&~~~\text{have $\mu(x) =
       (\rowa,\cola)$, $\mu(y) = (\rowa+1,\cola')$}
   \end{alignat*}
}For any closed formula $\varphi$, we define $(\pi,\sigma) \models \varphi$ if $(\pi,\sigma) \models_{v_\emptyset} \varphi$, where $v_\emptyset(x) = v_\emptyset(y) = \bot$. In this case we say that $(\pi,\sigma)$ \emph{satisfies} $\varphi$. For example, the valued permutation of Figure~\ref{fig:valperm}-$b$ satisfies the formula $\forall x \forall y . \lnot ({x \suca y} \,\land\, {y \sucb x}\, \land\,  {p(x)})$. The \defstyle{satisfiability problem} for $\FOtwo(\suca,\sucb)$ is then, given a closed formula $\varphi \in \FOtwo(\suca,\sucb)$, whether $(\pi,\sigma) \models \varphi$ for some $(\pi,\sigma)$.

\paragraph*{Scott normal form}
Any formula $\varphi \in \FOtwo(\suca,\sucb)$ can be converted into a sa\-tis\-fia\-bility equivalent formula in Scott normal form, which is of the form
\begin{align*}
  \forall x \forall y ~ \chi ~~\land~~ \bigwedge_i \forall x \exists y ~
  \psi_i,
\end{align*}
where $\chi$ and all the $\psi_i$'s are quantifier-free formulas of $\FOtwo(\suca,\sucb)$. The resulting formula is linear in terms of the size of the original formula. Further, this reduction is polynomial-time (see, \eg,~\cite{GradelO99}). Henceforward we assume that all the formulas we work with are in Scott normal form, unless otherwise stated.

\section{Results}

\begin{theorem}\label{th:FOtwo(suca,sucb):nexptime}
  The satisfiability problem for $\FOtwo(\suca, \sucb)$   is \nexptime-complete.
\end{theorem}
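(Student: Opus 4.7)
The lower bound is already supplied by \cite{EtessamiVW02}, so the core task is the \nexptime upper bound. My plan is to work with the valued-permutation representation and reduce satisfiability to a nondeterministic guess-and-verify procedure whose certificate is of exponential size. Assume $\varphi$ is in Scott normal form, so a model must globally satisfy the adjacency-based conjunct $\forall x\forall y\,\chi$ together with finitely many witness demands $\forall x\exists y\,\psi_i$. The natural per-element invariant is a 1-type over $\vars\varphi$; the natural invariant on a pair is a 1-type pair equipped with one of the neighborhood types $\set{\bullet,\nearrow,\uparrow,\nwarrow,\leftarrow,\swarrow,\downarrow,\searrow,\rightarrow,\infty}$.

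The central combinatorial object is the maximal block decomposition of $\pi$ (Figure~\ref{fig:valperm}-$c$): a maximal run of elements whose successive neighborhood types are constantly $\searrow$ or constantly $\nearrow$. Within such a block the two successor relations $\suca$ and $\sucb$ coincide up to orientation, so $\FOtwo(\suca,\sucb)$ essentially sees a finite word, and the block is faithfully summarized by its fingerprint (Figure~\ref{fig:valperm}-$d$), the sequence of 1-types of its elements. The heart of the argument is a small-model lemma of the following flavour. First, I would show that if a block is longer than some exponential threshold in $|\varphi|$, then an infix can be excised without violating $\chi$ at the stitching points and without destroying any $\psi_i$-witness demand: every witness inside the excised infix can be reinstated using another occurrence of the same 1-type, since by pigeonhole 1-types repeat many times in a sufficiently long block. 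Second, I would bound the number of distinct blocks by an analogous pumping-between-blocks argument, removing sequences of consecutive blocks whose combined signature is redundant with respect to $\chi$ and to the witness budgets. This parallels the small-model arguments used for $\FOtwo$ over words in \cite{EtessamiVW02,BDMSS10:tocl}.

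The \nexptime procedure then guesses (i) an exponentially-long sequence of block descriptions, each consisting of a bounded fingerprint together with the row-interval and column-interval it spans, and (ii) Parikh-style counts of 1-types per block earmarked as cross-block $\psi_i$-witnesses. Verification is polynomial-time in the guess: check that $\chi$ holds within every block and at every block boundary, that every element obtains its local $\psi_i$-witnesses inside its block, and that all remaining witness demands are matched globally against the Parikh budgets.

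The main obstacle I expect is the coordination between the two orders: a block spans a row-interval and a column-interval that may be far apart in the second order, and the juxtaposition of row-intervals must interleave consistently with that of column-intervals for the whole description to realize a genuine permutation. Encoding this interleaving succinctly, so that both the guess remains of exponential size and its consistency stays polynomial-time verifiable, is the delicate step, and is presumably where the earlier attempt of \cite{Manuel10} ran into trouble. A reasonable route is to compile this coordination into a polynomial-size system of linear constraints on the (exponentially many) block endpoints and Parikh counts, and to have the \nexptime algorithm certify feasibility of that system as part of its guess.
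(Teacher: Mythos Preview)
Your block/fingerprint decomposition and the small-block pumping are right and match Propositions~\ref{prop:small-blocks} and~\ref{prop:few-fingerprints}. The gap is exactly where you flag it, and your proposed resolution does not close it. Once blocks are collapsed to labelled points, one needs a labelled permutation over the fingerprint alphabet whose row-reading lies in one regular language, whose column-reading lies in another, \emph{and} in which no two points with compatible block-types sit at diagonal neighbourhood type $\nearrow$ or $\searrow$ (otherwise the two blocks would merge and the fingerprints would not be maximal). The first two conditions are indeed Parikh-linear; the third is a combinatorial constraint on the permutation itself, not on multiplicities, and cannot be encoded as a linear system over endpoints and Parikh counts: two words with identical Parikh image may or may not admit a realizing permutation that avoids the forbidden adjacencies. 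Note also that the paper bounds only the number of \emph{distinct} fingerprints, not the total number of blocks, so guessing the full block sequence as you propose is not available either.

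The paper isolates this coordination step as the Restricted Labeled Permutation problem and supplies the missing combinatorial ingredient: if a fingerprint label occurs more than $17$ times, its positions can \emph{always} be placed so as to avoid all diagonal adjacencies with previously placed points, via Hall's marriage theorem applied to $(n,4)$-constraints (Corollary~\ref{cor:constraint-(n,4)}). Hence only the rare labels need their exact positions guessed, giving a polynomial-size skeleton permutation; the frequent labels are then handled purely by a Parikh-image intersection of two NFA, which is checkable in \np. This Hall-based lemma is the heart of the upper bound and is precisely what your linear-constraint sketch does not supply.
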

As an immediate corollary we have that the same bound holds for $\EMSOtwo(\suca, \sucb)$, where $\EMSOtwo(\suca, \sucb)$ stands for formulas of $\FOtwo(\suca, \sucb)$ prefixed by existential quantification over sets of permutation elements.
\begin{corollary}
  The satisfiability problem for   $\EMSOtwo(\suca, \sucb)$   is \nexptime-complete.
\end{corollary}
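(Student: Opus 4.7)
The plan is to reduce satisfiability for $\EMSOtwo(\suca,\sucb)$ to satisfiability for $\FOtwo(\suca,\sucb)$ by a straightforward linear-time translation, and then invoke Theorem~\ref{th:FOtwo(suca,sucb):nexptime}.

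For the lower bound I would argue immediately: since every $\FOtwo(\suca,\sucb)$ formula is trivially an $\EMSOtwo(\suca,\sucb)$ formula (with an empty block of set quantifiers), \nexptime-hardness transfers from Theorem~\ref{th:FOtwo(suca,sucb):nexptime}.

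For the upper bound, consider $\varphi = \exists P_1 \cdots \exists P_k.\, \psi$ where $\psi \in \FOtwo(\suca,\sucb)$ and each $P_i$ is a unary predicate symbol. I would introduce fresh propositional letters $p_1,\ldots,p_k \in \V$ not occurring in $\vars\psi$, and let $\psi' \in \FOtwo(\suca,\sucb)$ be obtained from $\psi$ by replacing every atomic formula $P_i(a)$ (for $a \in \set{x,y}$) by $p_i(a)$. The key observation is that, on a valued permutation $(\pi,\sigma)$, a unary predicate $P_i \subseteq \pi$ is just a subset of the permutation elements, and fixing $P_i$ is equivalent to extending the valuation $\sigma$ by $p_i \in \sigma(\rowa,\cola)$ iff $(\rowa,\cola) \in P_i$. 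Thus $\varphi$ is satisfiable iff $\psi'$ is satisfiable: from a model $(\pi,\sigma)$ of $\varphi$ witnessed by sets $P_1,\ldots,P_k$, one builds a valuation $\sigma'$ that coincides with $\sigma$ on $\V \setminus \set{p_1,\ldots,p_k}$ and encodes membership in $P_i$ via $p_i$; conversely, from a model $(\pi,\sigma')$ of $\psi'$ one reads off the witnesses $P_i := \set{(\rowa,\cola) \in \pi \mid p_i \in \sigma'(\rowa,\cola)}$. A routine induction on the structure of $\psi$ shows that the satisfaction relation is preserved under this identification.

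Since the translation is syntactically a renaming, it is computable in linear time and produces a formula of size linear in $|\varphi|$, so the \nexptime upper bound for $\FOtwo(\suca,\sucb)$ from Theorem~\ref{th:FOtwo(suca,sucb):nexptime} immediately gives a \nexptime upper bound for $\EMSOtwo(\suca,\sucb)$, completing the proof. There is no real obstacle here: the only point to check carefully is that the semantics of $\EMSOtwo$ ranges over arbitrary subsets of the domain, which on valued permutations is exactly what a free choice of a propositional letter at each element provides.
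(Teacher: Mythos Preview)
Your proposal is correct and is precisely the standard argument that makes this an ``immediate corollary'': the paper gives no proof at all, merely stating that the same bound holds because $\EMSOtwo(\suca,\sucb)$ is $\FOtwo(\suca,\sucb)$ prefixed by existential set quantifiers, and your reduction (replacing each quantified set variable by a fresh propositional letter) is exactly what unpacks that remark.
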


\paragraph*{Proof sketch}

First, in Section~\ref{sec:few.small.blocks} we show a property of the blocks of a valued permutation. A block of a permutation can be seen as a set of positions $\set{(\rowa, \cola), (\rowa+1, \cola+1), \dotsc, (\rowa+k, \cola+k)}$ (or $\set{(\rowa, \cola), (\rowa+1, \cola-1), \dotsc, (\rowa+k, \cola-k)}$) of the permutation.\footnote{A similar notion of \emph{block} is also used in \cite{manuelzeumeUNP}.} We prove that if a $\FOtwo(\suca,\sucb)$ formula $\varphi$ is satisfiable, then it is satisfiable in a valued model where every block is of size bounded exponentially in the size of $\varphi$. Moreover, the number of different types of blocks that can appear in the valued permutation is also bounded exponentially in the size of $\varphi$.

Second, in Section~\ref{sec:permutatins.constraints} we show a combinatorial proposition. This involves what we call $n$-permutation \emph{constraints}, which are sets of positions of $[n] \times [n]$ where a permutation satisfying this constraint is not allowed to have an element. We give a sufficient condition on how large $n$ must be to ensure that there exists a permutation satisfying any constraint with a certain property---namely that it has at most 4 elements in any row or column.

Finally, in Section~\ref{sec:restricted.labeled.perm.pb} we introduce a problem called the Restricted Labeled Permutation problem (RLP), which we show to be decidable in $\np$ using the result of Section~\ref{sec:permutatins.constraints}. We then show that satisfiability for $\FOtwo(\suca, \sucb)$ can be reduced in $\nexptime$ to the RLP problem, by using the results on the size of the blocks of Section~\ref{sec:few.small.blocks}. Thus, decidability of the satisfiability problem for $\FOtwo(\suca,\sucb)$ follows, with a tight upper bound of $\nexptime$.

\section{Few and small blocks properties}
\label{sec:few.small.blocks}
\begin{definition}[Block] Given an $n$-permutation $\pi$ we say that $B\subseteq [n] \times [n]$ is a  \defstyle{block} of $\pi$ if $B = [i,i+k] \times [j,j+k]$ for some $k \in \Nz$ so that $i,j, i+k, j+k \in [n]$, and either
\begin{itemize*}
  \item $\size B =1$, and in this case we say that $B$ has type `$\bullet$', or, otherwise,
  \item for every $(\rowa,\cola), (\rowa+1,\cola') \in B \cap \pi$, we have $\cola' = \cola+1$, and in this case we say that $B$ has type `$\searrow$', or
  \item for every $(\rowa,\cola), (\rowa+1,\cola') \in B \cap \pi$, we have $\cola' = \cola-1$, and in this case we say that $B$ has type `$\nearrow$'.
\end{itemize*}
We say that $k$ is the \defstyle{size} of the block $B$. A block $B$ is \defstyle{maximal} if there is no  block $B'$ of $\pi$ with $B \subsetneq B'$. Figure~\ref{fig:valperm}-$c$ shows the three maximal blocks of a permutation, one with type $\nearrow$ and two with type $\bullet$.
\end{definition}

\newcommand{\bb}{\mathbf{b}}

\begin{proposition} \label{prop:small-blocks}
Any minimal valued permutation satisfying $\varphi \in \FOtwo(\suca,\sucb)$ is such that  every block is of size at most exponential in $\varphi$.
\end{proposition}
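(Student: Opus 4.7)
My plan is a pumping argument by contradiction. Write $\varphi$ in Scott normal form $\forall x\forall y\,\chi \land \bigwedge_{\ell=1}^m \forall x\exists y\,\psi_\ell$, take a minimal $(\pi,\sigma) \models \varphi$, and suppose it contains a block $B$ of size $k$. WLOG $B$ has type $\searrow$, with elements $e_1,\ldots,e_k$ at positions $(r+a-1,c+a-1)$ and $\vars\varphi$-valuations $v_1,\ldots,v_k$; the $\nearrow$ case is symmetric. The crucial geometric observation is that $B$ occupies a full $k\times k$ block of rows and columns of $\pi$, so no element outside $B$ is row- or column-adjacent to any interior $e_a$ (with $1<a<k$); consequently the only 2-types realised at such an $e_a$ are $(v_a,v_a,\bullet)$, $(v_a,v_{a-1},\nwarrow)$, $(v_a,v_{a+1},\searrow)$, and $(v_a,v',\infty)$ for $v'$ the valuation of any non-neighbour.

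For each interior $e_a$ and each $\psi_\ell$ I fix an arbitrary witness $y_a^\ell\in\pi$ with $\psi_\ell(e_a,y_a^\ell)$ true, and let $\beta_\ell(e_a)$ denote the 2-type of the pair $(e_a,y_a^\ell)$; by the previous observation its shape lies in $\{\bullet,\nwarrow,\searrow,\infty\}$. I define the \emph{local profile} $\tau(e_a) = (v_{a-1},v_a,v_{a+1};\,\beta_1(e_a),\ldots,\beta_m(e_a))$, which takes at most $2^{O(|\varphi|^2)}$ distinct values (three valuations of $2^{|\vars\varphi|}$ each, and $m$ witness-2-types of $O(2^{|\vars\varphi|})$ each). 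If $k$ exceeds a threshold $N=2^{O(|\varphi|^2)}$, pigeonhole yields interior indices $i<j$ with $\tau(e_i)=\tau(e_j)$; in particular $v_i=v_j$, $v_{i\pm 1}=v_{j\pm 1}$, and $\beta_\ell(e_i)=\beta_\ell(e_j)$ for every $\ell$. Let $\pi'$ be obtained by deleting rows $r+i,\ldots,r+j-1$ and columns $c+i,\ldots,c+j-1$ and renumbering; this removes exactly $e_{i+1},\ldots,e_j$, leaves $\sigma$ unchanged on survivors, and is a valid permutation strictly smaller than $\pi$.

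To verify $\pi'\models\varphi$, a short case-analysis shows that the only newly-adjacent pair in $\pi'$ is $(e_i,e_{j+1})$, whose 2-type $(v_i,v_{j+1},\searrow)=(v_j,v_{j+1},\searrow)$ was already realised in $\pi$ by $(e_j,e_{j+1})$, so $\chi$ remains universally satisfied. For the existential conjuncts, only three classes of elements can lose their chosen witness: (a) $e_i$, if $\beta_\ell(e_i)=(v_i,v_{i+1},\searrow)$; (b) $e_{j+1}$, if its chosen witness was $e_j$; and (c) any element $z$ (inside or outside $B$) whose chosen witness was some $e_c$ with $i<c\le j$, which is necessarily at type $\infty$ from $z$. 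Cases (a) and (b) are discharged immediately by $\tau(e_i)=\tau(e_j)$: in $\pi'$ the new successor of $e_i$ (resp.\ predecessor of $e_{j+1}$) realises exactly the 2-type that had been used to satisfy $\psi_\ell$ in $\pi$.

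Case (c) is the main obstacle: each removed $e_c$ may have been the sole $\psi_\ell$-witness of some $z$, and I must supply another $y^*\in\pi'$ of valuation $v_c$ at type $\infty$ from $z$. Since every surviving interior element of $B$ automatically sits at type $\infty$ from every $z$ not adjacent to it, it suffices to arrange that every valuation occurring in the deleted window $\{v_{i+1},\ldots,v_j\}$ is still realised at some surviving element of $B$. My plan for this is to enrich the profile $\tau$ so that matching $\tau(e_i)=\tau(e_j)$ additionally forces this preservation; concretely, I expect to add a summary bit for each of the at most $m\cdot 2^{O(|\varphi|)}$ "witness 2-types of interest" indicating whether that 2-type is realised strictly outside the 3-neighbourhood of $e_a$, and to combine this with a secondary pigeonhole on valuation multiplicities inside $B$ to force the chosen cut $(i,j)$ to preserve every needed witness. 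The enrichment keeps the profile count at $2^{O(|\varphi|^2)}$. Once case (c) is resolved, $\pi'$ is a strictly smaller model of $\varphi$, contradicting minimality, and therefore $k\le 2^{O(|\varphi|^2)}$.
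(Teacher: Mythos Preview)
Your overall strategy---find two interior positions $e_i,e_j$ in the block with matching $3$-windows $(v_{a-1},v_a,v_{a+1})$ and excise the segment between them---is exactly the paper's approach, and your treatment of the universal conjunct and of cases (a) and (b) is correct. The gap is precisely where you flag it: case~(c).

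Your proposed enrichment does not close that gap. Adding to $\tau(e_a)$, for each valuation $v'$, a bit recording whether $v'$ is realised outside the $3$-neighbourhood of $e_a$ does \emph{not} guarantee that $v'$ survives the excision. For instance, if $v'$ occurs in $\pi$ only at $e_{i+2}$ and $e_{j-2}$, then both bits at $e_i$ and at $e_j$ equal~$1$ (each sees an $\infty$-occurrence of $v'$), yet after deleting $e_{i+1},\dotsc,e_j$ the valuation $v'$ has vanished entirely. More generally, an ``is-realised-somewhere'' bit computed at $e_a$ is oblivious to \emph{where} the realiser sits relative to the cut, so matching such bits at $e_i$ and $e_j$ cannot force a realiser into the surviving part. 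There is also a size issue: one bit per valuation already contributes $2^{|\vars\varphi|}$ bits to the profile, hence $2^{2^{|\vars\varphi|}}$ profiles, which is doubly exponential. Your second hint, a ``secondary pigeonhole on valuation multiplicities inside $B$'', points in the right direction but is not a proof as stated.

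The paper resolves case~(c) structurally, \emph{before} choosing the cut, and without enlarging the profile at all. For each valuation $S\subseteq\vars\varphi$ it first fixes up to three representatives $S_\pi$ among the elements of $B$ carrying $S$; the union $\bigcup_S S_\pi$ has at most $3\cdot 2^{|\vars\varphi|}$ elements. These elements split $B$ into at most $3\cdot 2^{|\vars\varphi|}+1$ maximal sub-blocks that avoid every $S_\pi$; if $B$ is large enough, one such sub-block $B'$ already has more than $2^{3(|\vars\varphi|+1)}$ elements, and inside $B'$ a pigeonhole on the \emph{bare} $3$-window $(v_{a-1},v_a,v_{a+1})$ yields the matching pair. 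Because the excised segment lies entirely inside $B'$, all reserved representatives survive, and any $z$ that lost an $\infty$-witness of valuation $S$ is re-served by one of the surviving members of $S_\pi$. This gives the single-exponential bound $2^{O(|\vars\varphi|)}$ directly and, incidentally, shows that the witness-type components $\beta_\ell$ you included in the profile are unnecessary.
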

In fact, note that $\FOtwo(\suca, \sucb)$ on blocks is basically like $\FOtwo(\suca)$ (first order logic with a successor relation on a linear order), where we have the exponential length model property \cite{EtessamiVW02}. However, note that a block is within a context of other blocks, and special care must be taken in order to preserve all the elements that may be needed outside the block.

\newcommand{\type}{\textrm{type}}
\begin{definition}[Fingerprint]
  Given a maximal block $B = [k,k+n] \times [l,l+n]$ of a valued permutation $(\pi,\sigma)$, we define
$\xi(B)=(t, b^\suca_{-1},b^\suca_{+1},b^\sucb_{-1},b^\sucb_{+1}, a_0\dotsb a_{n})$, where
\begin{itemize*}
\item $t \in \set{\nearrow,\searrow,\bullet}$ is the type of $B$,
\item $b^\suca_{+1} = \sigma(k+n+1,\nth)$ if it exists, or $b^\suca_{+1} = \bot$ otherwise; $b^\sucb_{+1} = \sigma(\nth,l+n+1)$ if it exists, or $b^\sucb_{+1} = \bot$ otherwise,
\item $b^\suca_{-1} = \sigma(k-1,\nth)$ if it exists, or $b^\suca_{-1} = \bot$ otherwise; $b^\sucb_{-1} = \sigma(\nth,l-1)$ if it exists, or $b^\sucb_{-1} = \bot$ otherwise,
\item $a_i = \sigma(k+i,\nth)$ for all $0 \leq i \leq n$.
\end{itemize*}
$\xi(B)$ is the \defstyle{fingerprint} of $B$, and $t$ is the \defstyle{type} of $\xi(B)$ (notation: $\type(\tau)=t$, where $\xi(B)=\tau$). For $\xi(B) = \tau$, we also define $\tau_\suca = b^\suca_{-1} a_0 \dotsb a_n b^\suca_{+1}$; and $\tau_\sucb = b^\suca_{-1} a_0 \dotsb a_n b^\suca_{+1}$ if $t = \searrow$, or $\tau_\sucb = b^\suca_{-1} a_n \dotsb a_0 b^\suca_{+1}$ otherwise. The set of fingerprints of a valued permutation is the set of the fingerprints of all its maximal blocks. Figure \ref{fig:valperm}-$d$ contains an example of the fingerprints of a valued permutation.
\end{definition}

\begin{proposition}\label{prop:few-fingerprints}
  If $\varphi$ is satisfiable, then it is satisfiable in a minimal model with at most an exponential number of  fingerprints.
\end{proposition}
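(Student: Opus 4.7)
My plan follows the pumping strategy underlying Proposition~\ref{prop:small-blocks}, but applied at the level of the whole collection of maximal blocks in a minimal model rather than within a single block. Start with a minimal valued permutation $(\pi,\sigma)\models\varphi$, whose maximal blocks, by Proposition~\ref{prop:small-blocks}, have size at most $N=2^{p(|\varphi|)}$ for some polynomial $p$.

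The core is an equivalence $\equiv$ on fingerprints that (i) has at most exponentially many classes and (ii) in any minimal model is witnessed by at most one actual fingerprint per class. I would take $\xi_1\equiv\xi_2$ iff they share the block type $t$, the four boundary valuations $b^\suca_{\pm1},b^\sucb_{\pm1}$, the first and last 1-types of the inner sequence, and the $\FOtwo(\suca)$ $\varphi$-type of the inner sequence viewed as a word model (the two successors collapse on the diagonal of a block, up to direction fixed by $t$). The boundary and endpoint data range over $2^{O(|\varphi|)}$ values, and $\varphi$-types are determined by Boolean combinations of the $O(|\varphi|)$ subformulas of $\varphi$, so the class count is $2^{O(|\varphi|)}$.

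For uniqueness within a class, suppose blocks $B_1,B_2$ realise distinct $\xi_1\equiv\xi_2$. I would fix a canonical inner sequence $w_0$ for the class, of length bounded exponentially via the exponential model property for $\FOtwo(\suca)$ on words \cite{EtessamiVW02}, and replace the inner sequences of $B_1$ and $B_2$ by $w_0$, uniformly shifting the surrounding row and column indices of $(\pi,\sigma)$ to absorb the length delta. Satisfaction of $\varphi$ is preserved because: boundary-pair constraints on $\chi$ persist (boundaries are fixed); internal $\chi$ and internal $\psi_i$ witnessing persist by $\varphi$-type equivalence; and $\infty$-type pairs and external $\psi_i$-witnesses depend only on which 1-types are realised inside the block, which the $\varphi$-type also controls. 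Iterating collapses all fingerprints of a class to a single representative, contradicting the presence of distinct $\xi_1,\xi_2$ in a minimal model.

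The main obstacle is controlling external $\psi_i$-witnesses after replacement: if some element outside $B_1$ had its only $\psi_i$-witness at an interior position of $B_1$ with 2-type $\infty$, then $w_0$ must realise a position of the required 1-type. Encoding all such ``realisation of 1-type $q$'' conditions as extra formulas in the $\varphi$-type naively costs exponentially many conjunctions (one per 1-type of $2^{\vars\varphi}$) and would blow the class count to doubly exponential; the fix is to restrict to 1-types relevant to $\varphi$, namely equivalence classes of propositional valuations under the congruence induced by the atoms occurring in $\chi$ and the $\psi_i$'s, which keeps the class count at $2^{O(|\varphi|)}$. Working through this compression, together with the bookkeeping of row and column shifts so that the $\suca$- and $\sucb$-successor relations around $B_1,B_2$ remain correct after the swap, is the technically delicate core of the argument.
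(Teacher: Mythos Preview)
Your approach is workable in outline but takes a substantially harder route than the paper's. The paper does \emph{not} track any $\varphi$-type or inner-word information in its equivalence. Instead it uses the coarsest possible relation: two maximal blocks are identified when they agree only on the block type $t$ and the four boundary valuations $(b^\suca_{-1},b^\suca_{+1},b^\sucb_{-1},b^\sucb_{+1})$; the inner sequence $a_0\dotsb a_n$ is ignored entirely. The trick that makes this work is a separate \emph{red marking}: before any replacement, for every valuation $S\subseteq\vars\varphi$ pick up to four positions of $(\pi,\sigma)$ carrying $S$ and freeze the blocks containing them. This guarantees, globally, that every 1-type present in the model stays present with enough multiplicity to supply $\infty$-type witnesses and to absorb $\infty$-type universal obligations, regardless of what happens to the interior of any replaced block. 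The obstacle you spend most of your plan on---preserving the set of realised 1-types across a swap, and the threatened doubly-exponential blowup---simply disappears. One then picks a single green representative per boundary-class among the non-red blocks and overwrites each unmarked block with the contents of its green match; minimality of the starting model forces the sizes to coincide, so the result is again a minimal model, now with at most $4\cdot 2^{\size{\vars\varphi}}+3\cdot 2^{4\size{\vars\varphi}}$ fingerprints.

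What your approach buys is that each block is self-contained (no global marking pass), at the cost of the $\varphi$-type machinery and the 1-type compression you sketch. One point you should tighten if you pursue it: replacing by an \emph{abstract} canonical $w_0$ does not by itself contradict minimality of $(\pi,\sigma)$; you need $w_0$ of minimal length in its class, argue that every block of the class therefore already has length exactly $|w_0|$, and only then perform size-preserving swaps. The paper sidesteps this by always taking the replacement content from a block already present in the model.
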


By Propositions \ref{prop:small-blocks} and \ref{prop:few-fingerprints}, we can restrict our attention to permutations with labels, over the exponential alphabet of fingerprints of maximal blocks.
However, to do this we need restrict the possible permutations. For example, there cannot be two elements $(\rowa,\cola), (\rowa+1,\cola+1)$ in the permutation where both its labels contain fingeprints with type $\searrow$. Indeed, this would imply that the blocks to which these fingerprint correspond were not actually maximal. This suggests that  we need to deal with some sort of \emph{constraints} defining  valid permutations. 
This is the theme of the following section.

\section{Permutations under constraints}
\label{sec:permutatins.constraints}

We define constraints that restrict where permutations may or may not contain elements. A constraint specify some positions in which a permutation satisfying it is not allowed to have an element.
More precisely, a $(n,k)$ constraint contains not more than $k$ forbidden positions in an $n$-permutation.

\begin{definition}[$(n,k)$-constraint]  
Given $n,k \in \N$, and $S,T \subseteq \N$ with $\size{S} = \size{T} = n$, we say that 
$\zeta \subseteq S \times T$ is a $(n,k)$-constraint over $S \times T$ if for every $(\rowa,\cola) \in S \times T$ we have $\size{\set{\cola' \mid (\rowa,\cola') \in \zeta}} \leq k$ and $\size{\set{\rowa' \mid (\rowa',\cola) \in \zeta}} \leq k$. An $n$-permutation $\pi$ over $S \times T$ satisfies a $(n,k)$-constraint $\zeta$ over $S \times T$ if $\pi \cap \zeta = \emptyset$. If $S = T = [n]$ we say that $\zeta$ is just a $(n,k)$-constraint.
\end{definition}

\begin{remark}
As with the permutations, any $n$-permutation $\pi$ over $S$ satisfying a $(n,k)$-constraint $\zeta$ can be equivalently seen as a $n$-permutation $\pi'$ satisfying a $(n,k)$-constraint $\zeta'$ and vice-versa.
\end{remark}

\begin{proposition}
  For every $(n,k)$-constraint $\zeta$ with $n > 2k$, there is an $n$-permutation $\pi$ satisfying $\zeta$.
\end{proposition}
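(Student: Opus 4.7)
The plan is to reformulate the problem as finding a perfect matching in a bipartite graph and then apply Hall's marriage theorem. Build the bipartite graph $G$ on vertex set $[n] \sqcup [n]$ (rows on one side, columns on the other) with an edge between row $\rowa$ and column $\cola$ exactly when $(\rowa,\cola) \notin \zeta$. Perfect matchings in $G$ correspond bijectively to $n$-permutations $\pi$ with $\pi \cap \zeta = \emptyset$, so it suffices to show that one exists. The definition of $(n,k)$-constraint guarantees that each row and each column of the grid contains at most $k$ forbidden cells, which translates to every vertex of $G$ having degree at least $n-k$.

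Next, I would verify Hall's condition: for every $A \subseteq [n]$ on the row side, $|N(A)| \geq |A|$. Split into two cases according to the size of $A$. If $|A| \leq n-k$, then any single row in $A$ already has at least $n-k$ neighbours in $G$, so $|N(A)| \geq n-k \geq |A|$. If $|A| > n-k$, assume for contradiction that $|N(A)| < |A|$; then $B := [n] \setminus N(A)$ is nonempty (since $|N(A)| < |A| \leq n$) and $A \times B \subseteq \zeta$, because any pair in $A \times B$ that was not forbidden would contribute an edge of $G$ from $A$ into $B$, contradicting the definition of $B$. Counting the forbidden cells inside $A \times B$ column-by-column, using the column-bound of the constraint, yields $|A|\cdot |B| \leq k \cdot |B|$, hence $|A| \leq k$; but $|A| > n-k > k$ by the hypothesis $n > 2k$, a contradiction.

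Both cases establish Hall's condition, so by Hall's theorem $G$ admits a perfect matching, which is exactly the desired permutation. The argument is a short counting argument layered on top of Hall's theorem, so there is no serious obstacle; the only design choice is exploiting the symmetry of the constraint, using the column-bound to close the contradiction in the large-$|A|$ regime where the row-bound is too weak. The bound $n > 2k$ is used in precisely one place, to guarantee $n-k > k$, so that the two cases of the case split cover all subsets $A \subseteq [n]$ in a compatible way.
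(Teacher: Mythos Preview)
Your proof is correct and follows essentially the same approach as the paper: both recast the problem as a perfect matching in the bipartite row/column graph, invoke Hall's theorem, and split the verification of Hall's condition at the threshold $|A|=n-k$, using the minimum-degree bound $n-k$ for small $A$ and the column bound together with $n-k>k$ for large $A$. The only cosmetic difference is that in the large-$|A|$ case the paper argues directly that every column vertex has a neighbour in $A$, whereas you phrase the same counting as a contradiction via $A\times B\subseteq\zeta$.
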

\begin{proof}
  This can be shown by a simple application of Hall's Marriage Theorem~\cite{Hall35} (see also \cite[p.36]{DIE05B}).
Remember that Hall's theorem---in its finite, graph theoretic formulation---states that for any bipartite graph $G = (V_1 \cup V_2, E)$ with bipartite sets $V_1$ and $V_2$ of equal size, $G$ has a perfect matching if, and only if, every subset $S \subseteq V_1$ verifies $|N_G(S)| \geq |S|$. In the formulation, $N_G(S)\subseteq V_2$ is the  neighbourhood of $S$ in $G$ (\ie, the set of vertices adjacent to some vertex of $S$).

Let $\zeta$ be a $(n,k)$ constraint where $n > 2k$. Consider a bipartite graph $G = (V_r \cup V_c, E)$, where $V_r = \set{r} \times [n]$, $V_c = \set{c} \times [n]$. Vertices from $V_r$ represent rows and vertices from $V_c$ represent a columns.
The set of edges $E \subseteq V_r \times V_c$ is defined as all pairs $((r,i),(c,j))$ so that $(i,j) \not\in \zeta$ (\ie, they represent  permutation positions that do not interefere any constraint). Hence, there is a perfect matching between $V_r$ and $V_c$ if, and only if, there is an $n$-permutation satisfying $\zeta$.
To prove that there is such a matching, by Halls' theorem it suffices to verify $|N_G(S)| \geq |S|$ for every $S \subseteq V_r$. We show this by case distinction. 
\vspace{-1em}
\begin{itemize*}
\item Suppose first $|S| \leq n-k$. Note that every vertex of $V_r$ has at least $n-k$ edges because there are only $k$ constraints in $\zeta$. Then, $N_G(S)$ has $n-k$ vertices because every single vertex in $S$ has already $n-k$ edges.
\item Suppose now $|S| > n-k$. Since $n > 2k$, we have $n-k > k$. Then, every vertex from $V_c$ already has one neighbor in $S$, as every vertex from $V_c$ has at least $n-k > k$ neighbors.
\end{itemize*}
Hence, there is a perfect matching, and thus there exists a permutation satisfying $\zeta$.
\end{proof}
We then have the following corollary.
\begin{corollary}\label{cor:constraint-(n,4)}
For every $(n,4)$-constraint $\zeta$ with $n\geq 9$, there is an $n$-permutation $\pi$ that satisfies $\zeta$.  
\end{corollary}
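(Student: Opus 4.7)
The plan is to invoke the preceding proposition directly: it guarantees the existence of an $n$-permutation satisfying any $(n,k)$-constraint whenever $n > 2k$. Instantiating with $k = 4$ yields the threshold $n > 8$, which for integer $n$ is exactly the condition $n \geq 9$ stated in the corollary. So the proof amounts to little more than this arithmetic substitution.

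Concretely, I would begin by observing that a $(n,4)$-constraint is by definition a $(n,k)$-constraint with $k = 4$. Then, since $n \geq 9$ implies $n > 8 = 2 \cdot 4 = 2k$, the hypothesis of the proposition is satisfied. Applying the proposition produces the desired $n$-permutation $\pi$ satisfying $\zeta$.

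There is essentially no obstacle here; the only thing to double-check is the boundary case, namely that $n = 9$ really does satisfy the strict inequality $n > 2k = 8$, which it does. Thus the corollary is an immediate consequence of the proposition and requires no further combinatorial argument.
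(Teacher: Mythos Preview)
Your proposal is correct and matches the paper's approach exactly: the corollary is stated immediately after the proposition with no separate proof, since it follows by instantiating $k=4$ and noting that $n\geq 9$ gives $n>2k=8$.
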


\section{Labeled permutations}
\label{sec:restricted.labeled.perm.pb}

In this section we prove the \nexptime upper bound of the satisfiability problem for $\FOtwo(\suca, \sucb)$, using the developments of the two previous sections.
The idea is to guess the (exponentially many) blocks (of exponential size) of a minimal model that satisfies $\varphi$, and use them as letters of our alphabet. Using this guessing, we reduce the satisfiability problem into a problem we introduce next, the \emph{Restricted Labeled Permutation problem} (RLP). 
%For the rest of this section, we fix an arbitrary $\FOtwo(\suca, \sucb)$ formula $\varphi$ in Scott Normal Form. All the definitions are then relative to $\varphi$.

\subsection{Restricted labeled permutation problem}
\begin{definition}
  A \defstyle{labeled permutation} over a (finite) alphabet $\A$ is a pair $(\pi, \lambda)$ where $\pi$ is a permutation and $\lambda : \pi \to \A$. Note that $(\pi,\lambda)$ is nothing else than a valued permutation where exactly one propositional variable holds at any position.
\end{definition}
\newcommand{\proj}{\textit{Proj}}
\begin{definition}[$(\pi, \lambda)_\suca$, $(\pi, \lambda)_\sucb$]
Given a labeled $n$-permutation $(\pi, \lambda)$ we define $(\pi, \lambda)_\suca$ and $(\pi, \lambda)_\sucb$ as follows:  $(\pi, \lambda)_\suca = \lambda(1,\cola_1) \dotsb \lambda(n,\cola_n)$ and $(\pi, \lambda)_\sucb = \lambda(\rowa_1,1) \dotsb \lambda(\rowa_n,n)$,
where $\set{(1,\cola_1), \dotsc, (n,\cola_n)}  = \set{(\rowa_1,1), \dotsc, (\rowa_n,n)} = \pi$.
\end{definition}

\begin{definition}
  A \defstyle{label restriction} over an alphabet $\A$ is a triple $(a,t,b)$, where $a,b \in \A$ and $t \in \set{\nearrow, \searrow}$. We say that a labeled permutation $(\pi, \lambda)$ \emph{satisfies} $(a,t,b)$ if for every $(\rowa,\cola), (\rowb,\colb) \in \pi$ such that $\lambda(\rowa,\cola) = a$, $\lambda(\rowb,\colb)=b$ we have $\ntype{\pi}{(\rowa,\cola)}{(\rowb,\colb)} \neq t$.
We say that $(\pi, \lambda)$ satisfies a \emph{set} of label restrictions if it satisfies all of its members.
\end{definition}

We define the main problem of this section.
Using the result of Section~\ref{sec:permutatins.constraints} on permutations under constraints, we show that this problem is in \np.

\begin{center}
\small
  \begin{tabular}{|rl|}
    \hline
\textsc{Problem:}& The Restricted Labeled Permutation \\
&problem (RLP)
\\
    \hline
    \textsc{Input:} & A finite alphabet $\A$, \\
&a set of label restrictions $R$, and\\
&two regular languages $\+L_1, \+L_2 \subseteq \A^*$ \\
&(given as NFA).
\\
    \textsc{Question:} & Is there a labeled permutation $(\pi, \lambda)$ \\
&satisfying $R$ 
such that \\
&$(\pi, \lambda)_\suca \in \+L_1$ and $(\pi, \lambda)_\sucb \in \+L_2$?\\
    \hline
  \end{tabular}
\end{center}

\newcommand{\lang}[1]{L(#1)}
\begin{proposition}\label{prop:rlpp:NP}
  The RLP problem is in \np.
\end{proposition}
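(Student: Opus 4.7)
The plan is to give a nondeterministic polynomial-time algorithm; the key combinatorial ingredient is Corollary~\ref{cor:constraint-(n,4)}, which allows one to certify the existence of a valid labeled permutation \emph{without materializing it}, provided the letter-multiplicities and the row/column label sequences are known.

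First, I would establish a small-model property for RLP: every positive instance admits a witness $(\pi,\lambda)$ of size $n$ polynomial in the input. The argument uses a pumping step on the two NFAs defining $\+L_1$ and $\+L_2$, together with the observation that label restrictions are local, depending only on immediately adjacent positions in the row or column order. Roughly, if $n$ exceeds a polynomial in $|\A|$ and the numbers of NFA states, pigeonhole on the product state space of the two NFAs (augmented with the local label neighbourhoods) yields a Parikh-balanced segment simultaneously excisable from $(\pi,\lambda)_\suca$ and $(\pi,\lambda)_\sucb$; rewiring $\pi$ accordingly preserves $R$ since no triggering neighbourhood is altered.

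Given the small-model property, the algorithm nondeterministically guesses a Parikh vector $(m_a)_{a\in\A}$ of polynomial bitsize, words $w_1\in\+L_1,w_2\in\+L_2$ of length $n=\sum_a m_a$ realising this Parikh vector (each certified by an explicit accepting path in its NFA), and, for each letter $a$ whose multiplicity $m_a$ lies below a constant threshold, an explicit bijection $\pi_a:R_a\to C_a$, where $R_a=\set{r\mid w_1[r]=a}$ and $C_a=\set{c\mid w_2[c]=a}$. For the remaining letters it invokes Corollary~\ref{cor:constraint-(n,4)} deterministically. Processing them in any fixed order, at the point of handling letter $a$ with $m_a\ge 9$, one counts the forbidden columns in $C_a$ for each row $r\in R_a$ imposed by the already-fixed $\pi_b$'s: each of the four restriction shapes $(a,\searrow,b),(a,\nearrow,b),(b,\searrow,a),(b,\nearrow,a)$ paired with the unique neighbour $b=w_1[r\pm 1]$ contributes at most one forbidden column, for a total of at most four per row, and symmetrically at most four forbidden rows per column; Corollary~\ref{cor:constraint-(n,4)} then yields a valid $\pi_a$. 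Self-restrictions $(a,t,a)$ constrain pairs of entries of $\pi_a$ rather than single positions; they are absorbed by the same count at the cost of slightly raising the constant threshold, following the same proof strategy as the Corollary.

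The main obstacle is the small-model property of the first step: the excision must simultaneously preserve NFA-acceptance in both $\+L_1$ and $\+L_2$, Parikh equality between $w_1$ and $w_2$, and the label restrictions $R$, despite the interplay between the row order and the column order mediated by $\pi$. Getting this simultaneous pumping argument right is the technical heart of the proof; once it is in place, all other steps are routine and clearly polynomial.
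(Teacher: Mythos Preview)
The proposed small-model property for RLP is false, so the approach fails at the first step. Take $R=\emptyset$: then RLP is equivalent to asking whether $pk(\+L_1)\cap pk(\+L_2)\neq\emptyset$, since once the Parikh images agree any bijection yields a labeled permutation. Over the alphabet $\{a_0,\dots,a_k\}$, let
\[
\+L_1 = a_0\cdot(a_1 a_2 a_2)^*\cdot(a_3 a_4 a_4)^*\cdots, \qquad
\+L_2 = (a_0 a_1 a_1)^*\cdot(a_2 a_3 a_3)^*\cdots .
\]
Both are recognised by NFA with $O(k)$ states. The Parikh constraints coming from $\+L_1$ force $|w|_{a_0}=1$ and $|w|_{a_{2i}}=2\,|w|_{a_{2i-1}}$; those from $\+L_2$ force $|w|_{a_{2i+1}}=2\,|w|_{a_{2i}}$. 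Together these give $|w|_{a_i}=2^i$, so every witness has length $2^{k+1}-1$, exponential in the input size. Hence you cannot guess $w_1,w_2$ explicitly in polynomial time.

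Your pumping argument cannot repair this: excising a contiguous block of rows from $(\pi,\lambda)_\suca$ removes a \emph{scattered} set of positions from $(\pi,\lambda)_\sucb$ (that is the whole point of the permutation), so there is no reason the resulting word should still be accepted by the NFA for $\+L_2$. Pigeonholing on the product of the two NFA state spaces does not help, because the two runs are indexed by two different, incompatible orders; a state repeat in one run corresponds to nothing usable in the other.

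This is precisely why the paper does \emph{not} try to materialise the witness. It instead separates letters into ``rare'' ($\le 17$ occurrences) and ``frequent'' ($>17$), guesses only the polynomial-size sub-permutation of the rare letters (which is where $R$ is enforced explicitly), and then certifies the existence of suitable $w_1,w_2$ for the frequent letters by checking non-emptiness of a Parikh-image intersection via an existential Presburger formula---an \np\ test that works even when the actual words are exponentially long. Corollary~\ref{cor:constraint-(n,4)} is then used only in the \emph{correctness} proof, to show that any Parikh-feasible pair can be completed to a full labeled permutation respecting $R$; it is not invoked as a step of the algorithm.
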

\begin{proof}
\begin{figure*} \newcommand{\boxdepic}{\raisebox{-2.5pt}{\includegraphics[height=10.5px]{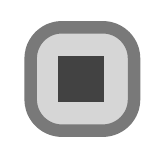}}}
\newcommand{\zonedepic}{\raisebox{-3.2pt}{\includegraphics[height=10.5px]{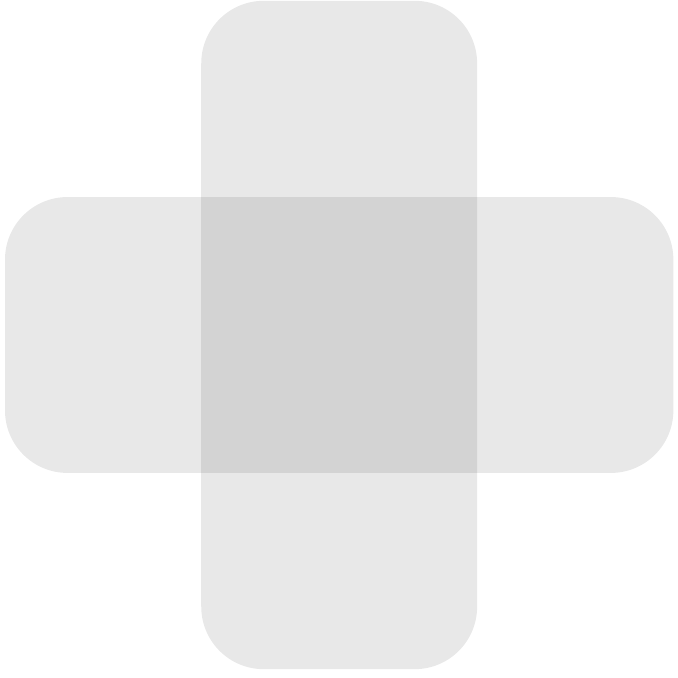}\hspace{0.8pt}}}
  \centering
  \includegraphics[width=\textwidth]{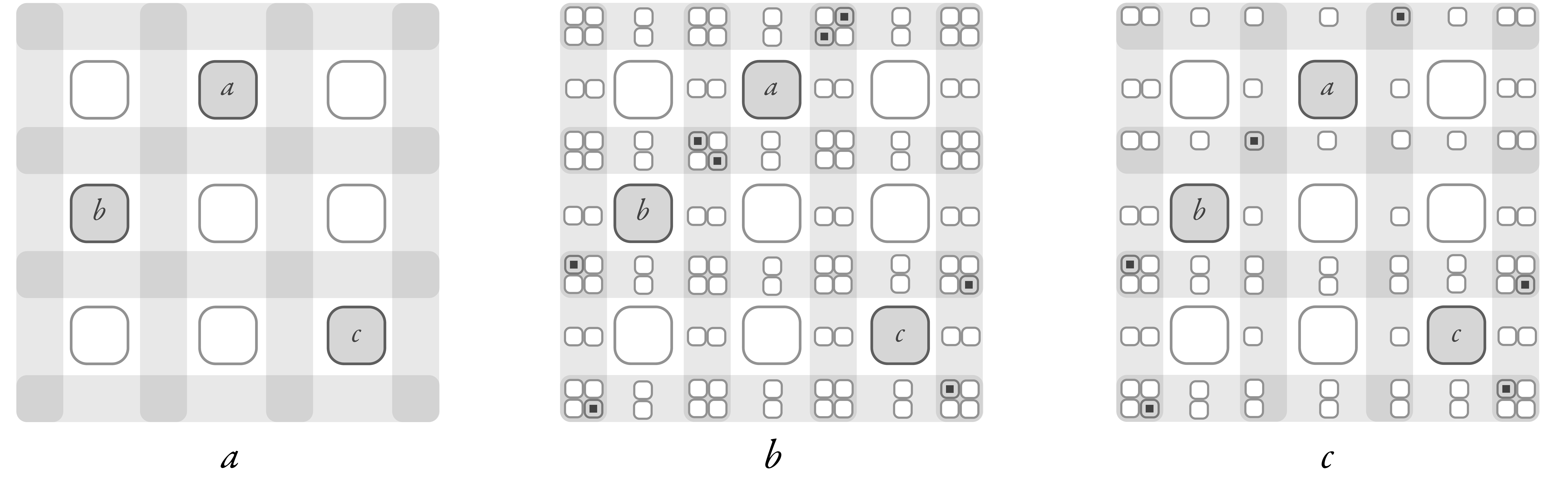}
  \caption{Example of the labeled permutation \ensuremath{(\pi',\lambda')} over the alphabet \ensuremath{\set{a,b,c, \Box}}, showing the zones (depicted as \protect\zonedepic{}) where the label \ensuremath{\Box} (depicted as \protect\boxdepic{}) can appear. Here, \ensuremath{g(a)=g(b)=g(c)=1}.}
  \label{fig:claim-polyboxes}
\end{figure*}
  Let $\+A_1, \+A_2$ be two NFA over the alphabet $\A$ corresponding to the regular languages $\+L_1, \+L_2$ respectively. Let $R$ be a set of restrictions.

The algorithm first guesses some properties of the labeled permutation $(\pi, \lambda)$ that  satisfies $R$ and is such that $(\pi, \lambda)_\suca \in \+L_1$ and $(\pi, \lambda)_\sucb \in \+L_2$. (We cannot simply guess $(\pi,\lambda)$ because it may be too big.) 
For each letter $a \in \A$ we guess if it appears exactly $k$ times in $(\pi,\lambda)$ for some $k \leq 17$, or if it appears more than $17$ times. Let $g : \A \to \set{0,1, \dotsc, 17, \infty}$ be this guessing; and let us define $\A_\leq = \set{a \in \A \mid g(a) \neq \infty}$ and $\A_> = \set{a \in \A \mid g(a) = \infty}$.

Let us call \emph{zone} to any set $S=[i,i+l] \times [j,j+l]$ for $i,j,l \in \N$.
Next, we guess a \emph{small} labeled permutation $(\pi',\lambda')$ over the alphabet $\A_\leq \cup \set{\Box}$. 
This labeled permutation is such that:
\begin{enumerate*}
  \item \label{it:atmost17} All the letters $a \in \A_\leq$ in $(\pi',\lambda')$ appear exactly $g(a)$ times.
  \item \label{it:notoomanybox}There is no zone $S$ so that
    \begin{itemize*}
    \item $\pi' \cap S$ has at least two elements, and
    \item $\lambda'(\rowa,\cola)=\Box$ for all $(\rowa,\cola) \in \pi' \cap S$.
    \end{itemize*}
  \item \label{it:pi'lam'-sats-R} $(\pi',\lambda')$ satisfies the restrictions $R$. 
\end{enumerate*}
\begin{claim}
All possible $(\pi',\lambda')$ satisfying the conditions above are labeled permutations of size polynomially bounded by $\size\A$.
\end{claim}
\begin{proof}
Note that, once we fix $g$, there are not more than
$
N=(~ 1 + \sum_{\substack{a \in \A\\g(a) \neq \infty}} g(a)~)^2
$ 
different zones containing only $\Box$ labels, that cover all positions where the label $\Box$ can occur in $(\pi',\lambda')$. These are the zones defined in between the elements of $\set{a \in \A \mid g(a) \neq \infty}$. Then, there cannot be more than $N$ elements with label $\Box$, since otherwise there would be at least one zone with more than one element, contradicting condition \eqref{it:notoomanybox}. Since $N \leq (1 + 17 \cdot \size\A)^2$, the claim follows.
For example, Figure~\ref{fig:claim-polyboxes}-$a$ depicts the $16=(1+ g(a)+g(b)+g(c))^2$ possible zones where the label $\Box$ can appear as the dark gray areas. In Figure~\ref{fig:claim-polyboxes}-$b$ we see that there is one zone (in fact, two) that contains more than one element $\Box$, and therefore condition \eqref{it:notoomanybox}  is falsified (for instance, when $S=[4,5] \times [4,5]$). Finally, Figure~\ref{fig:claim-polyboxes}-$c$ shows a labeled permutation $(\pi',\lambda')$ satisfying condition \eqref{it:notoomanybox}.
\end{proof}

\newcommand{\rex}{\textit{e}}
Let $\+A_{>}$ be an NFA over $\A$ that accepts all words $w \in \A^*$ such that  every $a \in \A_>$ appears more than 17 times in $w$.
Let $\rex_1$ [resp.\ $\rex_2$] be the regular expression resulting from replacing every appearance of $\Box$ in $(\pi',\lambda')_\suca$ [resp.\ in $(\pi',\lambda')_\sucb$] with the expression $(\A_>)^+$. 
Notice that, for every $i \in \set{1,2}$, any word $w$ of $\lang{\rex_i}\cap \lang{\+A_{>}}$ is such that  the number of appearances of $a \in \A_\leq$ in $w$ is exactly $g(a)$, and every other letter $a \in \A_>$ appears more than 17 times.
Let $\+A'_i$ denote the NFA corresponding to $\lang{\+A_i} \cap \lang{\rex_i}\cap \lang{\+A_{>}}$, for every $i \in \set{1,2}$. Observe that $\+A_{>}$, $\+A'_1$ and $\+A'_2$ can be built in polynomial time. Given a language $\+L \subseteq \A^*$, let $pk(\+L)$ denote the Parikh image of $\+L$.
We finally  check whether 
\[
pk(\lang{\+A'_1}) ~\cap~ pk(\lang{\+A'_2}) ~\neq~ \emptyset.
\]
This can be verified in \np  by computing the existential Presburger formulas for both automata in polynomial time \cite{VermaSS05}
 and checking for emptiness of its intersection in \np \cite{Papadimitriou:1981}.

 \begin{claim}
   $pk(\lang{\+A'_1}) ~\cap~ pk(\lang{\+A'_2}) ~\neq~ \emptyset$ if, and only if, there is a labeled permutation $(\pi, \lambda)$ that satisfies $R$, such that $(\pi, \lambda)_\suca \in \+L_1$ and $(\pi, \lambda)_\sucb \in \+L_2$.
 \end{claim}
The rest of the proof is dedicated to prove the statement above.

\smallskip

$[\Rightarrow]$
If $pk(\lang{\+A'_1}) ~\cap~ pk(\lang{\+A'_2}) \neq \emptyset$,  we show that there is a labeled permutation $(\pi, \lambda)$ that satisfies $R$ and such that $(\pi, \lambda)_\suca \in \+L_1$ and $(\pi, \lambda)_\sucb \in \+L_2$.

Let $w_1 \in \lang{\+A'_1}$, $w_2 \in \lang{\+A'_2}$ such that $pk(w_1) = pk(w_2)$, and let $m = |w_1| = |w_2|$. For every $a \in \A$, let $X_a \subseteq [m] \times [m]$ be defined as all $(\rowa,\cola) \in [m] \times [m]$ such that $w_1[\rowa]=a$ or $w_2[\cola] = a$ (\ie, $X_a$ is the set of possible permutation elements labeled with $a$). 
For any $A \subseteq \A$, let $X_{A} =  \bigcup_{a \in A} X_a$.

We define the labeled permutation $(\pi_{\A_\leq}, \lambda_{\A_\leq})$ as all the elements $(\rowa+\rowa', \cola+\cola')$  such that
\begin{itemize*}
\item $(\rowa,\cola) \in \pi'$, $\lambda'(\rowa,\cola) \neq \Box$, and
\item $\rowa'$ [resp.\ $\cola'$] is $k -  \ell$, where
  \begin{itemize*}
  \item $k$ is the number of occurrences of letters from $\A_>$ in $w_1$ [resp.\ in $w_2$] before the $\rowa$-th
    [resp.\ $\cola$-th] appearance of a letter from $\A_\leq$, and
  \item $\ell$ is the number of letters $\Box$ in $(\pi', \lambda')_\suca
    [1, \rowa]$ [resp.\ in $(\pi', \lambda')_\sucb [1, \cola]$].
  \end{itemize*}
Note that $k \geq \ell$.
\end{itemize*}
We define $\lambda_{\A_\leq}(\rowa+\rowa', \cola+\cola') = \lambda'(\rowa,\cola)$. We have that $(\pi_{\A_\leq},\lambda_{\A_\leq})$ is a labeled permutation over $X_{\A_\leq}$ satisfying $R$, as $(\pi',\lambda')$ satisfies $R$ by  \eqref{it:pi'lam'-sats-R}. In fact, it is equivalent to $(\pi',\lambda')$ when restricted to elements with labels in $\A_\leq$ (\cf~Figures~\ref{fig:construction-from-w1w2}-$a$, \ref{fig:construction-from-w1w2}-$b$).
\begin{figure*}
  \centering
  \includegraphics[width=\textwidth]{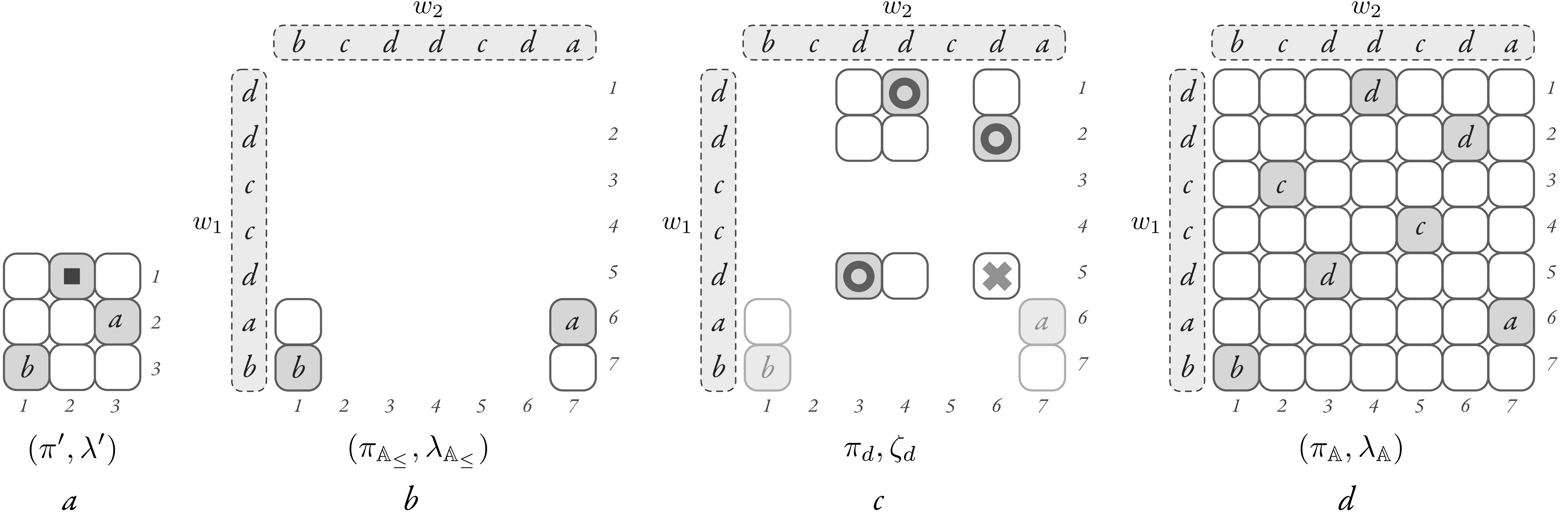}
  \caption{Example where $\A_\leq = \set{a,b}$, $\A_> = \set{c,d}$, $w_1 = ddccdab$, $w_2 = bcddcda$, $m=7$.
For illustration purposes, we let the threshold defining $\A_\leq$ and $\A_>$ to be $1$ instead of $17$.}
  \label{fig:construction-from-w1w2}
\end{figure*}
We build, for every $H \subseteq \A_>$, a labeled permutation $(\pi_{\A_\leq \cup H}, \lambda_{\A_\leq \cup H})$ over $X_{\A_\leq \cup H}$ satisfying $R$, so that $(\pi_{\A_\leq \cup H}, \lambda_{\A_\leq \cup H})_\suca$ [resp.\ $(\pi_{\A_\leq \cup H}, \lambda_{\A_\leq \cup H})_\sucb$] is $w_1$ [resp.\ $w_2$] projected onto $\A_\leq \cup H$. Note that when $H = \A_>$, $(\pi_{\A_\leq \cup H}, \lambda_{\A_\leq \cup H})$ is the labeled permutation over $X_\A = [m]\times [m]$ we are looking for. We build these inductively.

The base case is when $H=\emptyset$ and we then have $(\pi_{\A_\leq},\lambda_{\A_\leq})$, which clearly satisfies $R$. Suppose now we have constructed $(\pi_{\A_\leq \cup H}, \lambda_{\A_\leq \cup H})$ for some $H$, and let $a \in \A_>$ be such that $a \not\in H$. 
Let $\zeta_a$ be the set of all $(\rowa,\cola) \in X_a$ such that there is some $(\rowa',\cola') \in \pi_{\A_\leq \cup H}$ with $|\rowa-\rowa'|=|\cola-\cola'|=1$.
Using Corollary~\ref{cor:constraint-(n,4)} one can show that there is always a permutation over $X_a$ satisfying $\zeta_a$, so that it does not have any two elements one next to the other.  In the example of Figure~\ref{fig:construction-from-w1w2}, we see in item $c$ an illustration of a possible such permutation over $X_d$, and the constraints originating in this case from $(\pi_{\A_\leq}, \lambda_{\A_\leq})$.)

\begin{claim}
  There is a permutation $\pi_a$ over $X_a$ satisfying the constraints $\zeta_a$ such that there are no two $(\rowa,\cola), (\rowa',\cola') \in \pi_a$ with $|\rowa-\rowa'| = |\cola-\cola'|=1$.
\end{claim}
\begin{proof}
We first partition $X_a$ into two sets $X'_a$, $X''_a$ so that there are no two elements in $X'_a$ (resp.\ $X''_a$) with neighboring rows or columns.
Let $X'_a$ be the set of all $(\rowa,\cola) \in X_a$ such that $\size{\set{\rowa' \mid \rowa'\leq \rowa \land (\rowa',\cola) \in X_a}}$ and $\size{\set{\cola' \mid \cola'\leq \cola \land (\rowa,\cola') \in X_a}}$ are \emph{odd}; and let $X''_a$ be the set of all $(\rowa,\cola) \in X_a$ such that $\size{\set{\rowa' \mid \rowa'\leq \rowa \land (\rowa',\cola) \in X_a}}$ and $\size{\set{\cola' \mid \cola'\leq \cola \land (\rowa,\cola') \in X_a}}$ are \emph{even}. Since $\size{X_a} \geq 18^2$, we have that $\size{X'_a} \geq 9^2$ and $\size{X''_a} \geq 9^2$. Note that $\zeta_a$ does not have more than $4$ restrictions on each row and on each column. Then, $\zeta_a \cap X'_a$ is an $(\ell,4)$-constraint with $\ell \geq 9$. Hence, by Corollary~\ref{cor:constraint-(n,4)}, there exists a permutation $\pi'_a$ over $X'_a$ that satisfies $\zeta_a \cap X'_a$. Now let $\zeta'_a$ be the set of all $(\rowa,\cola) \in X''_a$ such that there is some $(\rowa',\cola') \in \pi'_a \cup \pi_{\A_\leq \cup H}$ where $|\rowa-\rowa'|=|\cola-\cola'|=1$. Notice that $\zeta_a \cap X''_a \subseteq \zeta'_a$.
Remember that $\size{X''_a} \geq 9^2$ and note that $\zeta'_a$ does not have more than $4$ restrictions on each row and on each column. Then, applying again Corollary~\ref{cor:constraint-(n,4)}, there is a permutation $\pi''_a$ over $X''_a$ that satisfies $\zeta'_a \cap X''_a$. By definition of $X'_a$ and $X''_a$ we have that $\pi_a = \pi'_a \cup \pi''_a$ is a permutation over $X_a$ that satisfies both $\zeta_a \cap X'_a$ and $\zeta'_a$. Further, since $\zeta_a \cap X''_a \subseteq \zeta'_a$, we have that $\pi_a$ satisfies $\zeta_a$. Also, by definition of $\zeta'_a$, we further have that there are no two $(\rowa,\cola), (\rowa',\cola') \in \pi_a$ with $|\rowa-\rowa'| = |\cola-\cola'| = 1$.
\end{proof}

We therefore define the new permutation $\pi_{\A_\leq \cup H \cup \set a} = \pi_a \cup \pi_{\A_\leq \cup H}$; and $\lambda_{\A_\leq \cup H \cup \set a}(\rowa,\cola) = a$ for all $(\rowa,\cola) \in X_a$, and $\lambda_{\A_\leq \cup H \cup \set a}(\rowa,\cola) = \lambda_{\A_\leq \cup H}(\rowa,\cola)$ otherwise. Note that it satisfies $R$ since for all new positions  $(\rowa,\cola)$ added there is no other position $(\rowa',\cola')$ such that $|\rowa-\rowa'|=|\cola-\cola'|=1$.

Finally, the desired labeled permutation is $(\pi_{\A_\leq \cup \A_>}, \lambda_{\A_\leq \cup \A_>})$.

\medskip

$[\Leftarrow]$ Suppose  that $pk(\lang{\+A'_1}) ~\cap~ pk(\lang{\+A'_2}) = \emptyset$; we show that there is no labeled permutation $(\pi, \lambda)$ verifying the restrictions imposed by the problem. 

First, we show that there cannot be a solution whose every label appears at most $17$ times. By means of contradiction, suppose $(\pi, \lambda)$ is such a solution. Then, the algorithm can guess $(\pi, \lambda)$, and we would then have that, by construction, $pk(\lang{\+A'_1}) = pk(\lang{\+A'_2}) \neq \emptyset$, which is in contradiction of our hypothesis.

Now suppose that $(\pi,\lambda)$ is a solution, where $\A_> \neq \emptyset$ is the set of letters that appear more than $17$ times in  $(\pi, \lambda)$. Let $(\pi'',\lambda'')$ be the replacement in $(\pi, \lambda)$ of every label from $\A_>$  with $\Box$. We can now build $(\pi',\lambda')$ from $(\pi'',\lambda'')$, by replacing each block containing only elements $\Box$ (and is maximal in size with respect to this property) with only one element $\Box$. For example, if $(\pi'',\lambda'')$ is as depicted in Figure~\ref{fig:claim-polyboxes}-$b$, we produce $(\pi',\lambda')$ by removing two rows and two columns, ending up with the labeled permutation of Figure~\ref{fig:claim-polyboxes}-$c$.

Let $\rex_1,\rex_2,\+A_>, \+A_1', \+A'_2$ be defined as before from $(\pi',\lambda')$. It is clear that $(\pi, \lambda)_\suca$ is in $L(\rex_1)$ and $(\pi, \lambda)_\suca$ in $L(\rex_2)$. It is therefore true that $(\pi, \lambda)_\suca \in \lang{\+A_1'}$ and $(\pi, \lambda)_\sucb \in \lang{\+A_2'}$. Since $(\pi, \lambda)$ is a labeled permutation we have that $pk((\pi, \lambda)_\suca) = pk((\pi, \lambda)_\sucb)$. Then, $pk(\lang{\+A'_1}) ~\cap~ pk(\lang{\+A'_2}) \neq \emptyset$, which is in contradiction with our hypothesis. 
Therefore, if $pk(\lang{\+A'_1}) ~\cap~ pk(\lang{\+A'_2}) = \emptyset$, there cannot be a solution $(\pi, \lambda)$ to the RLP instance.
\end{proof}

\subsection{Satisfiability for $\FOtwo$}
We now show that there is a \nexptime reduction from the satisfiability problem for $\FOtwo(\suca, \sucb)$ into the RLP problem. This, combined with the fact that RLP is in \np (Proposition~\ref{prop:rlpp:NP}), concludes the proof of Theorem~\ref{th:FOtwo(suca,sucb):nexptime},
 showing that the satisfiability problem for $\FOtwo(\suca,\sucb)$ is in \nexptime; hence it is \nexptime-complete~\cite{EtessamiVW02}.

Before going into the reduction, we show that the satisfaction of a formula in a valued permutation depends solely on its sets of fingerprints, plus some summary information. This summary information says, for every possible valuation $S$, how many times $S$ appears in $(\pi,\sigma)$ (counting up to a threshold of 3).
In the reduction from $\FOtwo(\suca,\sucb)$ into RLP we guess the  summary information and set of fingerprints of a minimal model (bounded by Propositions~\ref{prop:small-blocks} and \ref{prop:few-fingerprints}), translating the formula into a RLP instance.

\begin{definition}
  Given a set of fingerprints $X$, let $\hat X$ be the set of all the valuations in $X$, that is $\hat X = \set{\tau_\suca[i] \mid \tau \in X, 1 \leq  i \leq |\tau_\suca| \land \tau_\suca[i] \neq \bot}$.
Given a set of fingerprints $X$ and disjoint sets of valuations $V_1,V_2, V_3 \subseteq \hat X$, \defstyle{a valued permutation $(\pi, \sigma)$ over $X, V_1, V_2, V_3$} is any valued permutation such that 
 $X$ is the set of fingerprints of $(\pi,\sigma)$, and
% \[
% X = \set{\xi(B) \mid B \text{ is a maximal block of } (\pi,\sigma)},
% \]
 for every $1 \leq i \leq 3$, $V_i$ is the set of valuations that appear exactly $i$ times in $(\pi,\sigma)$.
 We also say that $X, V_1, V_2, V_3$ is the \defstyle{summary} of $(\pi, \sigma)$.
\end{definition}

Given a formula $\forall y . \psi$ where $\psi$ is a quantifier-free formula, and a valued permutation $(\pi,\sigma)$ with a block $B$, whether all the elements from $B$ verify $\forall y . \psi$ or not, depends only on: the summary of $(\pi,\sigma)$, and
  the fingerprint of $B$.
Similarly for formulas $\exists y . \psi$. Moreover, we can test this in polynomial time.

\newcommand{\model}[3]{(#1) \,{#2}\, (#3)}
We introduce the concept of a fingerprint being \defstyle{consistent} with a formula $\forall x \forall y . \chi$ [resp.\ $\forall x \exists y . \psi$] and a summary.\footnote{In fact, we do not need the set of fingerprints $X$ to define this notion but just the set of valuations $\hat X$, and it is therefore defined over $\hat X$.} 
These are the necessary and sufficient conditions to ensure that every element of a block with such fingerprint in a valued permutation over such summary  satisfies $\forall y . \chi$ [resp.\ $\exists y . \psi$].

\smallskip

First, note that for any quantifier-free formula $\psi$ of $\FOtwo(\suca,\sucb)$, the validity of $(\pi,\sigma) \models_{\mu} \psi$ only depends on: $S$, $S'$ and $t$, where: $S=\sigma(\mu(x))$, $S'=\sigma(\mu(y))$, and $t=\ntype{\pi}{\mu(x)}{\mu(y)}$. We will then write $\model{S}{t}{S'} \models \psi$, to denote that $\psi$ holds in any model that assigns $S$ to $x$, $S'$ to $y$ and so that the neighborhood type between $x$ and $y$ is $t$. Notice that we can decide $\model{S}{t}{S'} \models \psi$ in polynomial time. For example, if $\psi = {x {\rightarrow} y} \land (a(x) \lor \lnot b(y))$, we have $\model{\set{b}}{\nearrow}{\set{a,c}} \models \psi$ but $\model{\set{a}}{\downarrow}{\set{a,b}} \not\models \psi$.
% \begin{definition}
% By $\model{S}{t}{S'}$ where $S, S' \subseteq \vars \varphi$ and $t\in \set{\bullet,\nearrow,\uparrow,\nwarrow,\leftarrow,\swarrow,\downarrow,\searrow,\rightarrow,\infty}$ we denote any valued permutation $(\pi, \sigma)$ with $(\rowa,\cola), (\rowa',\cola') \in \pi$ such that $\sigma(\rowa,\cola) = S$, $\sigma(\rowa',\cola') = S'$, and $\ntype{\pi}{(\rowa,\cola)}{(\rowa',\cola')}=t$. By $\model{S}{t}{S'} \models \psi$ we denote $(\pi,\sigma) \models_{x \mapsto (\rowa,\cola), y \mapsto (\rowa',\cola')} \psi$, where $\psi$ is a quantifier-free formula.
% Note that all such models $(\pi,\sigma)$, we have that $(\rowa,\cola), (\rowa',\cola')$ verify the same quantifier-free formulas of $\FOtwo(\suca,\sucb)$. Further, we can decide $\model{S}{t}{S'} \models \psi$ in \np.
% \end{definition}

The formal definition of consistency is given next.
\begin{definition}\label{def:consistency}r
Let $\tau$ be a fingerprint, and $m = |\tau_\suca|$. Given a set of valuations $Y \subseteq 2^{\vars \varphi}$ and three disjoint sets $V_1, V_2, V_3 \subseteq Y$, we say that  $\tau$ is \defstyle{consistent with a universal formula} $\forall y . \chi$ and $V_1, V_2, V_3, Y$ if all of the following conditions hold:
\begin{enumerate*}
\small
  \item \label{def:consistency:forall:1}
For every $1 < i < m$, we have  $\model{\tau_\suca[i]}{\bullet}{\tau_\suca[i]} \models \chi$.
  \item \label{def:consistency:forall:2}

If $\tau_\suca[1] \neq \bot$, $\model{\tau_\suca[2]}{\leftarrow}{\tau_\suca[1]} \models \chi$.
  \item \label{def:consistency:forall:3}
If $\tau_\suca[m] \neq \bot$,
$
        \model{\tau_\suca[m-1]}{\rightarrow}{\tau_\suca[m]} \models \chi$.
  \item \label{def:consistency:forall:4}
For every $1 < i, i+1 < m$, if $\type(\tau)=\searrow$,
$
\model{\tau_\suca[i]}{\searrow}{\tau_\suca[i+1]} \models \chi$, 
        $\model{\tau_\suca[i+1]}{\nwarrow}{\tau_\suca[i]} \models \chi$,
 otherwise, if $\type(\tau) = \nearrow$,
$
\model{\tau_\suca[i]}{\nearrow}{\tau_\suca[i+1]} \models \chi$, $ \model{\tau_\suca[i+1]}{\swarrow}{\tau_\suca[i]} \models \chi$.
%

%     \begin{itemize*}
%     \item if $\type(\tau)=\searrow$,
% \[
% \model{\tau_\suca[i]}{\searrow}{\tau_\suca[i+1]} \models \chi,\quad
%         \model{\tau_\suca[i+1]}{\nwarrow}{\tau_\suca[i]} \models \chi , 
% \]
%      \item otherwise, if $\type(\tau) = \nearrow$,
% \[
% \model{\tau_\suca[i]}{\nearrow}{\tau_\suca[i+1]} \models \chi, \quad \model{\tau_\suca[i+1]}{\swarrow}{\tau_\suca[i]} \models \chi.
% \]
% %
%     \end{itemize*}

  \item \label{def:consistency:forall:5}
For every $1 \leq i-1, i, i+1 \leq m$,
    \begin{itemize*}
    \item for every $g \in Y \setminus \set{\tau_\suca[i-1], \tau_\suca[i],
        \tau_\suca[i+1]}$ then $\model{\tau_\suca[i]}{\infty}{g} \models \chi$,
   \item   for every  $g \in \set{\tau_\suca[i-1], \tau_\suca[i], \tau_\suca[i+1]}$, if $g \not\in V_j$ for 
$j = \size{\set{ t \in \set{1,0,-1} \mid \tau_\suca[i+t] = g}}$ then
$
  \model{\tau_\suca[i]}{\infty}{g} \models \chi$.
    \end{itemize*}
\item \label{def:consistency:forall:6}
Idem to items 2 and 3, but replacing $\suca$ with $\sucb$, $\leftarrow$ with $\uparrow$, and $\tau_\suca$ with $\tau_\sucb$.
\end{enumerate*}

We say that $\tau$ is \defstyle{consistent with an existential formula} $\exists y . \psi$ if for every $1 < i < m$ either
\begin{enumerate*}
\small
\item \label{def:consistency:exists:1}
  \begin{enumerate*}
  \item\label{def:consistency:exists:1:a} $\model{\tau_\suca[i]}{\bullet}{\tau_\suca[i]}
    \models \psi$, 
  \item\label{def:consistency:exists:1:b} $i+1 = m$ and $\tau_\suca[i+1] \neq \bot$ and
    $\model{\tau_\suca[i]}{\rightarrow}{\tau_\suca[i+1]}
    \models \psi$, or
  \item\label{def:consistency:exists:1:c} $i = 2$, $\tau_\suca[1] \neq \bot$ and
    $\model{\tau_\suca[2]}{\leftarrow}{\tau_\suca[1]}
    \models \psi$, 
  \end{enumerate*}
\item\label{def:consistency:exists:2} $\type(\tau)=\searrow$, and either
\begin{enumerate*}
    \item\label{def:consistency:exists:2:a} $1 < i, i+1 < m$ and $\model{\tau_\suca[i]}{\searrow}{\tau_\suca[i+1]} \models \psi$, 
    \item\label{def:consistency:exists:2:b} $1 < i-1, i < m$ and $\model{\tau_\suca[i]}{\nwarrow}{\tau_\suca[i-1]} \models \psi$, 
    \item\label{def:consistency:exists:2:c} $1 \leq i-1, i, i+1 \leq m$ and there is some $g \in Y \setminus \set{\tau_\suca[i-1],
        \tau_\suca[i], \tau_\suca[i+1]}$, such that
$
        \model{\tau_\suca[i]}{\infty}{g} \models \psi 
$, or
    \item\label{def:consistency:exists:2:d} $1 \leq i-1, i, i+1 \leq m$ and there is $g \in \set{\tau_\suca[i-1], \tau_\suca[i],
        \tau_\suca[i+1]}$ with $g \not\in V_j$ for $j = \size{\set{ t \in \set{1,0,-1} \mid \tau_\suca[i+t] = g}}$ such that
      $\model{\tau_\suca[i]}{\infty}{g} \models \psi$, 
    \end{enumerate*}
  \item\label{def:consistency:exists:3} $\type(\tau) = \nearrow$, and some condition as the ones in item \ref{def:consistency:exists:2} holds, where $\searrow$ and $\nwarrow$ are replaced with $\nearrow$ and $\swarrow$,  or
  \item\label{def:consistency:exists:4} Idem as condition $1$,  replacing $\suca$ with $\sucb$, $\leftarrow$ with $\uparrow$, and $\tau_\suca$ with $\tau_\sucb$.
\end{enumerate*}

Finally, we say that a fingerprint $\tau$ is consistent with a formula in Scott normal form $\varphi = \forall x \forall y . \chi \land \bigwedge_i \forall x \exists y . \psi_i$  and sets $Y,V_1, V_2, V_3$ if it is consistent with $\forall y . \chi$ and with $\exists y . \psi_i$ for all $i$.
\end{definition}
The following Lemmas follow straightforward from the previous definitions.

\begin{lemma}\label{lem:consistency:1}
  For every valued permutation $(\pi,\sigma)$ over $X, V_1,V_2,V_3$, with a maximal block $B$, and for every formula $\forall y . \chi$ where $\chi$ is quantifier-free, we have that all the elements from $B$ verify $\forall y . \chi$ if and only if $\xi(B)$ is consistent with $\forall y . \chi$ and $\hat X, V_1,V_2,V_3$.
\end{lemma}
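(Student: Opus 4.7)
The plan is to verify the biconditional by a direct case analysis on pairs $(e, e')$ with $e \in B \cap \pi$ and $e' \in \pi$, matching each possible neighborhood type $\ntype{\pi}{e}{e'}$ to exactly one clause in Definition~\ref{def:consistency}. Fix a maximal block $B = [k, k+n] \times [l, l+n]$ of $\pi$, its fingerprint $\tau = \xi(B)$, and $m = \len{\tau_\suca}$; let $e_i$ denote the block element of $\pi$ at fingerprint index $i$ (for $1 < i < m$). By the semantics of $\forall y.\chi$, the statement ``every element of $B$ satisfies $\forall y.\chi$'' unfolds to the family of obligations $\model{\sigma(e_i)}{\ntype{\pi}{e_i}{e'}}{\sigma(e')} \models \chi$, ranging over all such $i$ and all $e' \in \pi$. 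The clauses of the definition are designed so that each obligation is covered by exactly one clause, and conversely every clause is an instance of such an obligation for some choice of $e'$.

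First, I would dispatch the local cases. Clause~(1) handles $e' = e_i$ (type $\bullet$). Clauses~(2) and~(3) handle the row-boundary neighbors of the extremal block elements ($i = 2$ and $i = m-1$): by maximality of $B$, the element in row $k-1$ (resp.\ $k+n+1$) cannot extend $B$ diagonally, so its neighborhood type to the relevant extremal $e_i$ is forced to the value stipulated by the clause, and its valuation is precisely $\tau_\suca[1]$ (resp.\ $\tau_\suca[m]$). Clause~(4) handles the diagonal neighbors $e_{i-1}, e_{i+1}$ inside the block, whose neighborhood type to $e_i$ is forced by $\type(\tau)$. Clause~(6) is the column-side mirror of~(2)--(3) via $\tau_\sucb$ and $b^\sucb_{\pm 1}$. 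Together these clauses stand in one-to-one correspondence with the pairs $(e_i, e')$ for which $\ntype{\pi}{e_i}{e'} \neq \infty$, together with the self-pair.

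The only subtle step, which I expect to be the main obstacle, is clause~(5), which must collect the remaining pairs $(e_i, e')$, namely those with $\ntype{\pi}{e_i}{e'} = \infty$. The key observation is that for a fixed valuation $g \in \hat X$, the elements of $\pi$ with valuation $g$ split into those lying in the local neighborhood of $e_i$ (valuations in $\set{\tau_\suca[i-1], \tau_\suca[i], \tau_\suca[i+1]}$) and those at $\infty$-distance from $e_i$; writing $j$ for the number of local occurrences of $g$, the summary condition $g \in V_j$ holds if and only if every occurrence of $g$ in $\pi$ is local, which in turn holds if and only if there is no $\infty$-pair $(e_i, e')$ with $\sigma(e') = g$. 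Hence the obligation $\model{\tau_\suca[i]}{\infty}{g} \models \chi$ is genuinely needed exactly when $g \not\in V_j$, which is precisely what clause~(5) demands; its first sub-bullet is just the specialisation to $j = 0$. With this correspondence in place, both directions of the biconditional are immediate, and the only remaining care is for edge cases (the extremal indices $i \in \set{2, m-1}$, $\bot$-entries in $\tau_\suca$ or $\tau_\sucb$, and blocks of size zero of type $\bullet$), where the corresponding clauses become vacuous because the relevant boundary or diagonal elements simply do not exist.
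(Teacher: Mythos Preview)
Your proposal is correct and follows essentially the same approach as the paper's proof: both directions are established by a case analysis on the neighborhood type $\ntype{\pi}{e_i}{e'}$, matching each type to the corresponding clause of Definition~\ref{def:consistency}, with maximality of $B$ used to pin down the boundary types and the summary sets $V_1,V_2,V_3$ used to characterise exactly when an $\infty$-witness with a given valuation exists. Your articulation of the clause~(5) argument (that $g\in V_j$ holds iff all occurrences of $g$ are local to $e_i$) is in fact more explicit than the paper's own treatment.
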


\begin{lemma}\label{lem:consistency:2}
  For every valued permutation $(\pi,\sigma)$ over $X, V_1,V_2,V_3$, with a maximal block $B$, for every formula $\exists y . \psi$  where $\psi$ is quantifier-free, we have that all the elements from $B$ verify $\exists y . \psi$ if and only if $\xi(B)$ is consistent with $\exists y . \psi$ and $\hat X, V_1,V_2,V_3$.
\end{lemma}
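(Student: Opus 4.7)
The plan is to verify both directions of the equivalence by unpacking Definition~\ref{def:consistency} (existential case) and matching each of its disjuncts to the topologically possible locations of a witness $y$ for $\exists y.\psi$ relative to a block element $x$. Fixing a maximal block $B$ with fingerprint $\tau = \xi(B)$, writing $m = |\tau_\suca|$, and considering an arbitrary $x \in B \cap \pi$ corresponding to an index $i$ with $1 < i < m$, the proof would carry out the two implications as follows.

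For the direction $(\Leftarrow)$, I would assume $\tau$ is consistent with $\exists y.\psi$ and $\hat X, V_1, V_2, V_3$. For each such $i$, at least one of the numbered disjuncts of Definition~\ref{def:consistency} holds, and each disjunct directly exhibits a permutation element as a witness: condition~(1) provides $x$ itself or one of the $\suca$-boundary elements of $B$; condition~(4) provides a $\sucb$-boundary element; sub-disjuncts (a),(b) of (2) and (3) provide a diagonal neighbor of $x$ inside $B$. The cases that require genuine justification are sub-disjuncts (c),(d) of (2) and (3), where the claimed witness has neighborhood type $\infty$ and some valuation $g \in \hat X$. In case (c), the hypothesis $g \notin \set{\tau_\suca[i-1],\tau_\suca[i],\tau_\suca[i+1]}$ guarantees that every occurrence of $g$ in $(\pi,\sigma)$ lies outside the three diagonal positions adjacent to $x$ (and at least one such occurrence exists since $g \in \hat X$), hence at distance $\infty$ from $x$. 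In case (d), the hypothesis $g \notin V_j$ for $j$ equal to the local multiplicity of $g$ in those three positions forces, via the summary, that $g$ occurs strictly more than $j$ times in $(\pi,\sigma)$, so at least one occurrence must sit away from the three local positions.

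For the direction $(\Rightarrow)$, I would, for each $x \in B$, fix an arbitrary witness $y \in \pi$ for $\exists y.\psi$ and perform a case analysis on $\ntype{\pi}{x}{y}$ together with the location of $y$ relative to $B$. Since the permutation elements of $B$ all lie on a single (anti-)diagonal, the row and column through $x$ contain no other element of $B$, and therefore any in-row or in-column witness of $x$ must be a $\suca$- or $\sucb$-boundary element of $B$, landing in disjuncts~(1) or~(4); maximality of $B$ ensures that the corresponding boundary element is not itself diagonally aligned with $B$. Diagonal witnesses interior to $B$ fall into sub-disjuncts (a),(b) of (2) or (3) according to the type of $B$. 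Finally, if $y$ has neighborhood type $\infty$ and $g = \sigma(y)$, then either $g \notin \set{\tau_\suca[i-1],\tau_\suca[i],\tau_\suca[i+1]}$, placing us in (c); or $g$ lies in that triple, in which case the existence of $y$ as an extra occurrence of $g$ outside those three diagonal neighbors makes the total count of $g$ in $(\pi,\sigma)$ strictly exceed the local count $j$, so $g \notin V_j$, placing us in (d).

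The hard part will be the clean treatment of the $\infty$-witness case: one must carefully separate the \emph{local} multiplicity of $g$ (its count among $\tau_\suca[i-1],\tau_\suca[i],\tau_\suca[i+1]$) from its \emph{global} multiplicity in $(\pi,\sigma)$ (encoded by $V_1, V_2, V_3$), and argue why the threshold $3$ suffices --- any valuation with global count exceeding $3$ automatically admits a non-local occurrence since at most three of its occurrences can be local, whereas for valuations with global count at most $3$ the condition ``$g \notin V_j$'' exactly separates those with no extra occurrence from those with at least one. Apart from this subtlety, the lemma reduces to a mechanical dictionary between the disjuncts of Definition~\ref{def:consistency} and the possible positions of a witness $y$ relative to $x$ and $B$, in the same spirit as the argument for Lemma~\ref{lem:consistency:1}.
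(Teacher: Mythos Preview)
Your proposal is correct and follows essentially the same approach as the paper's proof: a two-way case analysis matching each disjunct of Definition~\ref{def:consistency} against the possible neighborhood types of a witness $y$ relative to $x \in B$. In fact your treatment of the $\infty$-witness case (disjuncts (c),(d)) is more carefully argued than the paper's own proof, which simply asserts that the relevant condition is met without spelling out the local-versus-global multiplicity reasoning.
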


\begin{remark}\label{rem:consistency-polynomial}
  Note that the property of consistency of Definition~\ref{def:consistency} can be checked in polynomial time. 
\end{remark}

% \begin{corollary}\label{lem:consistency-property}
% For any valued permutation $(\pi,\sigma)$ with summary $S$ and a maximal block $B$ with $\xi(B)=\tau$, one can check whether   $(\pi,\sigma) \models_{x \mapsto (\rowa,\cola)} \psi$ for every $(\rowa,\cola) \in B$, in polynomial time.
% \end{corollary}
% \begin{proof}[Proof of Lemma~\ref{lem:consistency-property}]
% Note that the property of consistency of Definition~\ref{def:consistency} can be checked in polynomial time. 
% Hence, Lemma~\ref{lem:consistency-property} follows directly from Lemmas~\ref{lem:consistency:1} and \ref{lem:consistency:2}.
% \end{proof}

\begin{lemma}
There is a \nexptime reduction from the satisfiability problem for $\FOtwo(\suca,\sucb)$ into the RLP problem.
\end{lemma}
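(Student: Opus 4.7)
The plan is to reduce satisfiability of a formula $\varphi$ (in Scott normal form) to RLP by guessing the summary of a minimal model and encoding the remaining existence question as a labeled-permutation problem. First, nondeterministically guess a set $X$ of fingerprints and disjoint valuation sets $V_1, V_2, V_3 \subseteq \hat X$; by Propositions~\ref{prop:small-blocks} and \ref{prop:few-fingerprints}, it suffices to consider $|X|$ and each $|\tau_\suca|$ bounded exponentially in $|\varphi|$, so the guess is of exponential size. Next, verify for every $\tau \in X$ and every conjunct of $\varphi$ that $\tau$ is consistent with that conjunct and with $\hat X, V_1, V_2, V_3$ in the sense of Definition~\ref{def:consistency}; by Remark~\ref{rem:consistency-polynomial} each check is polynomial, so the whole verification runs in exponential time. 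After this step, Lemmas~\ref{lem:consistency:1} and \ref{lem:consistency:2} guarantee that any valued permutation whose summary is exactly $(X, V_1, V_2, V_3)$ satisfies $\varphi$, so the remaining question is purely combinatorial: does such a valued permutation exist?

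To answer it, build an RLP instance $(\A, R, \+L_1, \+L_2)$ whose solutions correspond, up to reshuffling of equally-fingerprinted blocks, with the valued permutations of summary $(X, V_1, V_2, V_3)$. Take $\A = X$, so that each element of the RLP labeled permutation represents a single permutation position carrying the fingerprint of its maximal block. The label restrictions $R$ enforce \emph{maximality}: for every pair $\tau, \tau' \in X$ of the same type $\searrow$ (resp.\ $\nearrow$) whose borders would make them mergeable into a larger block, include the restriction $(\tau, \searrow, \tau')$ (resp.\ $(\tau, \nearrow, \tau')$) to forbid adjacency in that direction. The NFA for $\+L_1$ accepts exactly those sequences whose $\suca$-reading is legal, meaning: each label $\tau$ occurs in consecutive runs of length precisely $|\tau_\suca| - 2$, at each transition between distinct labels the outgoing $\suca$-border of the predecessor matches the incoming valuation of the successor (and vice versa), and the total count of each valuation obtained by summing the $a$-sequence contributions equals $i$ for $v \in V_i$ and is at least $4$ for $v \in \hat X \setminus (V_1 \cup V_2 \cup V_3)$. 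The NFA for $\+L_2$ is built symmetrically from $\tau_\sucb$, remembering that $\nearrow$-fingerprints are read in reverse along columns.

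For correctness, one direction associates with every minimal model $(\pi,\sigma) \models \varphi$ of summary $(X, V_1, V_2, V_3)$ the labeled permutation obtained by labeling each position with the fingerprint of its maximal block; by construction it satisfies $R$, $\+L_1$, and $\+L_2$. Conversely, any solution $(\pi, \lambda)$ of the RLP instance is unfolded into a valued permutation $(\pi, \sigma)$ by assigning to each element the entry of its fingerprint's $a$-sequence corresponding to its position within the current run of equal labels; the regular constraints in $\+L_1, \+L_2$ ensure that both axes agree on this assignment and that the resulting summary is exactly $(X, V_1, V_2, V_3)$, $R$ ensures maximality, and the consistency checks of step one, through Lemmas~\ref{lem:consistency:1} and \ref{lem:consistency:2}, then ensure $(\pi,\sigma) \models \varphi$.

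The principal technical obstacle is to keep $\+A_1$ and $\+A_2$ of size at most exponential in $|\varphi|$, since $|V_1 \cup V_2 \cup V_3|$ may itself be exponential. Run-lengths and border agreement pose no problem: states of the form \emph{(current fingerprint, position within it)} yield a state-count linear in the total fingerprint size, hence exponential in $|\varphi|$. For the valuation counts, the naive product of per-valuation counters is doubly exponential, so one must instead exploit the fact that the RLP algorithm of Proposition~\ref{prop:rlpp:NP} already performs a Parikh check via existential Presburger: by padding $\A$ with dedicated count-slot letters for each $v \in V_1 \cup V_2 \cup V_3$ and threading them through the automata in canonical positions, the exact counts become linear Parikh equalities over the composite alphabet and are enforced by the ambient RLP algorithm, leaving $\+A_1$ and $\+A_2$ of exponential size and the overall reduction in \nexptime.
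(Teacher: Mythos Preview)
Your overall plan---guess the summary $(X, V_1, V_2, V_3)$, check consistency fingerprint by fingerprint, and encode the residual existence question as an RLP instance over the alphabet $X$---matches the paper. The essential divergence is in what an element of the labeled permutation stands for. In the paper each RLP position represents an entire \emph{maximal block}; the valued permutation is recovered by \emph{expanding} every letter into its block, and the restrictions $R$ only have to say that two block-letters must not be diagonally adjacent when their types would force a merge after expansion. You instead let each RLP position represent a single \emph{element} of the valued permutation, labeled by the fingerprint of its block, and ask $\+L_1,\+L_2$ to enforce run-lengths.

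This per-element encoding has a gap that label restrictions cannot close. Inside a $\searrow$-block every pair of consecutive elements carries the same label $\tau$ and is $\searrow$-adjacent, so putting $(\tau,\searrow,\tau)$ into $R$ (to stop two distinct $\tau$-blocks from merging) would already forbid any $\tau$-block with more than one element; leaving it out, nothing enforces maximality for repeated fingerprints. More seriously, $\+L_1$ and $\+L_2$ see only the two axis-readings of labels, not the permutation itself, so nothing forces a length-$m_\tau$ run to actually form a block of the intended type. Take $\pi=\{(1,2),(2,1),(3,4),(4,3)\}$ with $\lambda(1,2)=\lambda(2,1)=\tau$ and $\lambda(3,4)=\lambda(4,3)=\tau'$, where $\tau,\tau'$ are $\searrow$-fingerprints of block size $2$: both axis-readings are $\tau\tau\tau'\tau'$, no $\searrow$-adjacency occurs anywhere (so your $R$ is satisfied), yet the maximal blocks of $\pi$ are of type $\nearrow$, and your row-based and column-based unfoldings assign different valuations to $(2,1)$. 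Since label restrictions may only \emph{forbid} $\nearrow$/$\searrow$ adjacencies, never \emph{require} them (and cannot touch the $\downarrow,\uparrow,\leftarrow,\rightarrow$ cases at all), no choice of $R$ repairs this within your encoding. The paper's one-letter-per-block encoding avoids the issue entirely: the block decomposition is explicit in the labeled permutation, so there is nothing to reconcile.

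Your remark on the cost of encoding the $V_i$ counts in an NFA is well observed---the paper is indeed terse at that point---but it is orthogonal to the structural defect above.
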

\begin{proof}
Let $\varphi \in \FOtwo(\suca,\sucb)$ be in Scott normal form, $\varphi = \forall x \forall y . \chi \land \bigwedge_i \forall x \exists y . \psi_i$.  By Proposition~\ref{prop:few-fingerprints}, there are at most an exponential number of different fingerprints, and by Proposition~\ref{prop:small-blocks} each one of them is at most of exponential size. The algorithm guesses the set $X$ of all fingerprints needed in a minimal valued permutation that satisfies $\varphi$, and summary sets $V_1,V_2,V_3 \subseteq \hat X$.  The algorithm checks that for every $\tau \in X$,  $\tau$ is consistent with $\varphi$ and $\hat X, V_1, V_2, V_3$.

We define the language $\+L_1 \subseteq X^*$ [resp.\ $\+L_2 \subseteq X^*$] of all words $w \in X^*$ such that
\begin{itemize*}
  \item the first element of $(w[1])_\suca$ [resp.\ of $(w[1])_\sucb$] is $\bot$, and the last element of $(w[|w|])_\suca$ [resp.\ of $(w[|w|])_\sucb$] is $\bot$,
  \item for every $1 \leq i < |w|$ such that  $(w[i])_\suca = a_1 \dotsb a_{n-1} a_n$ [resp.\ $(w[i])_\sucb = a_1 \dotsb a_{n-1} a_n$] and $(w[i+1])_\suca = b_1 b_2 \dotsb b_m$ [resp.\ $(w[i+1])_\sucb = b_1 b_2 \dotsb b_m$] we have $a_{n-1} a_n = b_1 b_2$,
  \item all the elements of $X$ appear in $w$,
  \item for every $1 \leq i \leq 3$, $V_i$ is the set of valuations that appear exactly $i$ times in $\hat w$, where
$\hat w = \hat w_1 \dotsb \hat w_{|w|}$
and $
\hat w_i = (w[i])_\suca[2] \dotsb (w[i])_\suca[m_i -1]
$
 for $m_i = |(w[i])_\suca|$.
\end{itemize*}
It is immediate that $\+L_1$ and $\+L_2$ are regular languages, and that they can be defined by two NFA that can be built in polynomial time in the size of $X$.

Finally, we define the label restrictions, avoiding having two blocks that actually define a bigger block (because these blocks are supposed to be maximal). Let $R$ be the set of all triples $(\tau,d,\tau')$ such that $\tau, \tau' \in X$ and either
\begin{itemize*}
  \item $d = \searrow$,  $\type(\tau) \neq \nearrow$, $\type(\tau') \neq \nearrow$, or
  \item  $d = \nearrow$, $\type(\tau)\neq \searrow$, $\type(\tau') \neq \searrow$.
\end{itemize*}
We reduced the satisfiability problem into the RLP problem for $(X,R,\+L_1, \+L_2)$, concluding the proof.
\begin{claim}\label{cl:RLP<=>SAT}
  The RLP instance $(X,R,\+L_1, \+L_2)$ has a positive solution if{f} the formula $\varphi$ is satisfiable.\qedhere
\end{claim}
\end{proof}

\section{Conclusion}
Our work shows that the following combinatorial problem is at the core of the satisfiability for $\FOtwo(\suca,\sucb)$ and of the RLP problem, and is decidable in \np. Given two regular languages $\+L, \+L' \subseteq \A^*$, is there a word $a_1 \dotsb a_n \in \+L$ and a permutation $p: [n] \to [n]$ so that $a_{p(1)} \dotsb a_{p(n)} \in \+L'$ and $|p(i+1) - p(i)|>1$ for all $i \in [n]$? 

A natural question left open is whether this decidability result can be extended to $\FOtwo$ with $k$ successor relations over finite linear orders is decidable, for arbitrary $k$ (or at least for $k=3$).

% In fact the logic has a small model property. Indeed, in the proof of Proposition~\ref{prop:rlpp:NP}, if $pk(\lang{\+A_1 \cap \+A'_1}) ~\cap~ pk(\lang{\+A_2 \cap \+A'_2}) \neq \emptyset$, then the size of $w_1, w_2$ such that $pk(w_1)=pk(w_2)$, $w_1 \in \lang{\+A_1 \cap \+A'_1}$, $w_2 \in \lang{\+A_2 \cap \+A'_2}$ is polynomially bounded.
% One could then, try to iterate this proof to show decidability for $\FOtwo$ with three successor relations on three linear orders. 

% 

% the size of 
% since the  proof Proposition~\ref{prop:rlpp:NP} gives us a bound (indeed the labeled permutations needed in 
% in fact a small model property, 

% \subparagraph*{Acknowledgements} I thank Tony Tan for helpful discussions, and anonymous reviewers for valuable suggestions.

\bibliography{short,biblio}

 \appendix

 \section{Missing proofs}

\begin{proof}[Proof of Lemma~\ref{lem:consistency:1}]
\hfill

[$\Rightarrow$]
Suppose that every element $(\rowa,\cola)\in B \cap \pi$ verifies $(\pi,\sigma) \models_{[x \mapsto (\rowa,\cola)]} \forall y . \chi$. We show that $\tau = \xi(B)$ is consistent with $\forall y . \chi$ and $\hat X, V_1,V_2,V_3$.

Condition \ref{def:consistency:forall:1} is met, since in particular $(\pi,\sigma) \models_{[x \mapsto (\rowa,\cola), y \mapsto (\rowa,\cola)]} \chi$ for every $(\rowa,\cola) \in B \cap \pi$; this means that $\model{\sigma(\rowa,\cola)}{\bullet}{\sigma(\rowa,\cola)} \models \chi$, which implies condition \ref{def:consistency:forall:1}.

If $i$ is the smallest column element of $B$ and $i>1$, it means that $\tau_\suca[1] \neq \bot$, in fact $\tau_\suca[1] = \sigma(\nth,i-1)$. Since for some $(\rowa,i), (\rowa',i-1) \in \pi$,  $(\pi,\sigma) \models_{[x \mapsto (\rowa,i), y \mapsto (\rowa',i-1)]} \chi$,  and since $B$ is maximal, we have that $|\rowa - \rowa'| >1$ and hence that 
$\model{\sigma(\rowa,i)}{\leftarrow}{\sigma(\rowa',i-1)} \models \chi$. As $\sigma(\rowa',i-1)=\tau_\suca[1]$ and $\sigma(\rowa,i)=\tau_\suca[2]$, condition \ref{def:consistency:forall:2} is met. Condition \ref{def:consistency:forall:3} is similar.

Suppose $\type(\tau) = {\searrow}$ and $1 < i, i+1 < |\tau_\suca|$. This means that there are  $(\rowa,\cola), (\rowa+1, \cola+1) \in B \cap \pi$ with $\sigma(\rowa,\cola) = \tau_\suca[i]$, $\sigma(\rowa+1, \cola+1) = \tau_\suca[i+1]$. Since $(\pi,\sigma) \models_{[x \mapsto (\rowa,\cola), y \mapsto (\rowa+1,\cola+1)]} \chi$ and $(\pi,\sigma) \models_{[y \mapsto (\rowa,\cola), x \mapsto (\rowa+1,\cola+1)]} \chi$ we have that $\model{\sigma(\rowa,\cola)}{\searrow}{\sigma(\rowa+1,\cola+1)} \models \chi$ and $\model{\sigma(\rowa+1,\cola+1)}{\nwarrow}{\sigma(\rowa,\cola)} \models \chi$. Hence, $\model{\tau_\suca[i]}{\searrow} {\tau_\suca[i+1]} \models \chi$ and $\model{\tau_\suca[i+1]}{\nwarrow}{\tau_\suca[i]} \models \chi$ and thus condition \ref{def:consistency:forall:4} holds. The proof for $\type(\tau) = {\nwarrow}$ is similar.

Suppose now that $1 \leq i-1, i, i+1 \leq |\tau_\suca|$. Let $g \in \hat X \setminus \set{\tau_\suca[i-1], \tau_\suca[i], \tau_\suca[i+1]}$. This means that there is some $(\rowa,\cola) \in B \cap \pi$ with $\sigma(\rowa,\cola)=\tau_\suca[i]$ and 
some $(\rowa',\cola') \in \pi$ with $\sigma(\rowa',\cola')=g$, $\ntype{\pi}{(\rowa,\cola)}{(\rowa',\cola')} = \infty$. Since $(\pi,\sigma) \models_{[x \mapsto (\rowa,\cola), y \mapsto (\rowa',\cola')]} \chi$, we have that $\model{\sigma(\rowa,\cola)}{\infty}{\sigma(\rowa',\cola')} \models \chi$ and hence the first part of condition \ref{def:consistency:forall:5} is met. On the other hand, if $g \in \set{\tau_\suca[i-1], \tau_\suca[i], \tau_\suca[i+1]}$ and $g \not\in V_j$ for $j = \size{\set{ t \in \set{1,0,-1} \mid \tau_\suca[i+t] = g}}$, this means that there must be necessarily some other position with valuation $g$. That is, there is some $(\rowa',\cola') \in \pi$ with $\sigma(\rowa',\cola')=g$, $\ntype{\pi}{(\rowa,\cola)}{(\rowa',\cola')} = \infty$. Then, the same reasoning as before applies to show that 
the second part of condition \ref{def:consistency:forall:5} is met.

\medskip

[$\Leftarrow$] 
Suppose that $\tau = \xi(B)$ is consistent with $\forall y . \chi$ and $\hat X, V_1,V_2,V_3$. Let us show that  every element $(\rowa,\cola)\in B \cap \pi$ verifies $(\pi,\sigma) \models_{[x \mapsto (\rowa,\cola)]} \forall y . \chi$. 

Let $(\rowa,\cola) \in B \cap \pi$ and $(\rowa',\cola') \in \pi$. If $\ntype{\pi}{(\rowa,\cola)}{(\rowa',\cola')} = \bullet$ then $(\pi,\sigma) \models_{[x \mapsto (\rowa,\cola), y \mapsto (\rowa',\cola')]} \chi$ by condition \ref{def:consistency:forall:1}. 
If $\ntype{\pi}{(\rowa,\cola)}{(\rowa',\cola')} = {\leftarrow}$, then $(\pi,\sigma) \models_{[x \mapsto (\rowa,\cola), y \mapsto (\rowa',\cola')]} \chi$ by condition \ref{def:consistency:forall:2}.
If $\ntype{\pi}{(\rowa,\cola)}{(\rowa',\cola')} = {\rightarrow}$, then $(\pi,\sigma) \models_{[x \mapsto (\rowa,\cola), y \mapsto (\rowa',\cola')]} \chi$ by condition \ref{def:consistency:forall:3}.
If $\ntype{\pi}{(\rowa,\cola)}{(\rowa',\cola')} \in \set{ \nearrow, \nwarrow, \searrow,\swarrow}$, then $(\pi,\sigma) \models_{[x \mapsto (\rowa,\cola), y \mapsto (\rowa',\cola')]} \chi$ by condition \ref{def:consistency:forall:4}. 
And if $\ntype{\pi}{(\rowa,\cola)}{(\rowa',\cola')} = \infty$, then $(\pi,\sigma) \models_{[x \mapsto (\rowa,\cola), y \mapsto (\rowa',\cola')]} \chi$ by condition \ref{def:consistency:forall:5}.  Hence, every element $(\rowa,\cola)\in B \cap \pi$ verifies $(\pi,\sigma) \models_{[x \mapsto (\rowa,\cola)]} \forall y . \chi$. 
\end{proof}

\begin{proof}[Proof of Lemma~\ref{lem:consistency:2}]

[$\Rightarrow$]
Suppose that every element $(\rowa,\cola)\in B \cap \pi$ verifies $(\pi,\sigma) \models_{[x \mapsto (\rowa,\cola)]} \exists y . \psi$. We show that $\tau = \xi(B)$ is consistent with $\exists y . \psi$ and $\hat X, V_1,V_2,V_3$. 

Let $(\rowa,\cola) \in B \cap \pi$, then there must be $(\rowa',\cola') \in \pi$ such that $(\pi,\sigma) \models_{[x \mapsto (\rowa,\cola), y \mapsto (\rowa',\cola')]} \psi$.
If $\ntype{\pi}{(\cola,\rowa)}{(\rowa',\cola')} = {\bullet}$ then condition  \ref{def:consistency:exists:1:a} is met. 
If $\ntype{\pi}{(\cola,\rowa)}{(\rowa',\cola')} = {\rightarrow}$, then condition \ref{def:consistency:exists:1:b} is met.
If $\ntype{\pi}{(\cola,\rowa)}{(\rowa',\cola')} = {\leftarrow}$, then condition \ref{def:consistency:exists:1:c} is met.
If $\ntype{\pi}{(\cola,\rowa)}{(\rowa',\cola')} = {\searrow}$, then condition \ref{def:consistency:exists:2:a} is met.
If $\ntype{\pi}{(\cola,\rowa)}{(\rowa',\cola')} = {\nwarrow}$, then condition \ref{def:consistency:exists:2:b} is met.
If $\ntype{\pi}{(\cola,\rowa)}{(\rowa',\cola')} = {\infty}$, then either condition \ref{def:consistency:exists:2:c} or \ref{def:consistency:exists:2:d} is met.
If $\ntype{\pi}{(\cola,\rowa)}{(\rowa',\cola')} \in \set{\nearrow,\swarrow}$, then condition \ref{def:consistency:exists:3} is met.
If $\ntype{\pi}{(\cola,\rowa)}{(\rowa',\cola')} \in \set{\uparrow,\downarrow}$, then condition \ref{def:consistency:exists:4} is met.
Thus, $\tau = \xi(B)$ is consistent with $\exists y . \psi$ and $\hat X, V_1,V_2,V_3$.

\medskip

[$\Leftarrow$]
Suppose that $\tau = \xi(B)$ is consistent with $\exists y . \psi$ and $\hat X, V_1,V_2,V_3$. Let us show that  every element $(\rowa,\cola)\in B \cap \pi$ verifies $(\pi,\sigma) \models_{[x \mapsto (\rowa,\cola)]} \exists y . \psi$. 

Let $(\rowa,\cola) \in B \cap \pi$. There must be some $1 < i < |\tau_\suca|$ such that $\tau_\suca[i] = \sigma(\rowa,\cola)$.
If condition  \ref{def:consistency:exists:1:a} holds, then
$(\pi,\sigma) \models_{[x \mapsto (\rowa,\cola), y \mapsto (\rowa,\cola)]} \psi$.
If condition  \ref{def:consistency:exists:1:b} holds, then
$(\pi,\sigma) \models_{[x \mapsto (\rowa,\cola), y \mapsto (\rowa',\cola+1)]} \psi$ for some $(\rowa',\cola+1) \in \pi$ with $|\rowa' - \rowa| > 1$.
If condition  \ref{def:consistency:exists:1:c} holds, then
$(\pi,\sigma) \models_{[x \mapsto (\rowa,\cola), y \mapsto (\rowa',\cola-1)]} \psi$ for some $(\rowa',\cola-1) \in \pi$ with $|\rowa' - \rowa| > 1$.
If condition  \ref{def:consistency:exists:2:a} holds, then
$(\pi,\sigma) \models_{[x \mapsto (\rowa,\cola), y \mapsto (\rowa+1,\cola+1)]} \psi$ for $(\rowa+1,\cola+1) \in \pi$.
If condition  \ref{def:consistency:exists:2:b} holds, then
$(\pi,\sigma) \models_{[x \mapsto (\rowa,\cola), y \mapsto (\rowa-1,\cola-1)]} \psi$ for $(\rowa-1,\cola-1) \in \pi$.
If condition  \ref{def:consistency:exists:2:c} or \ref{def:consistency:exists:2:d} holds, then
$(\pi,\sigma) \models_{[x \mapsto (\rowa,\cola), y \mapsto (\rowa-1,\cola-1)]} \psi$ for some $(\rowa',\cola') \in \pi$ with $\ntype{\pi}{(\rowa,\cola)}{(\rowa',\cola')} = \infty$.
A similar reasoning applies if \ref{def:consistency:exists:3} or \ref{def:consistency:exists:4} hold.
\end{proof}

\begin{proof}[Proof of Proposition~\ref{prop:small-blocks}]
We show the following statement: Any minimal valued permutation satisfying $\varphi \in \FOtwo(\suca,\sucb)$ is such that  every block is of size less or equal to \[3 \cdot 2^{4\size{\vars \varphi} +3}+2^{3(\size{\vars \varphi}+1)} + 3 \cdot 2^{\size{\vars \varphi}}.\]

\smallskip

Let $(\pi,\sigma)$ be a minimal valued permutation such that $(\pi,\sigma) \models \varphi$. Without loss of generality assume $\sigma : \pi \to 2^{\vars \varphi}$.

Suppose, by means of contradiction, that $B = [i,i+k] \times [i',i'+k]$ is a block of $\pi$ of size $k > 3 \cdot 2^{4 \size{\vars \varphi} +3}+2^{3(\size{\vars \varphi}+1)} + 3 \cdot 2^{\size{\vars \varphi}}$. Let us assume that $B$ is of type $\searrow$ (if it has type $\nearrow$ a symmetrical reasoning applies). For every $S \subseteq \vars \varphi$, choose any three elements from $\set{(\rowa,\cola) \in \pi \cap B \mid \sigma(\rowa,\cola)  = S}$, or, if there are less than three, all the elements; let $S_\pi$ be the set of these three (or less) elements. Now consider all the sub-blocks $B' \subsetneq B$ defined strictly between elements of $\bigcup_{S \subseteq \vars \varphi} S_\pi$, that is, such that $B' \cap \bigcup_{S \subseteq \vars \varphi} S_\pi = \emptyset$, consider the ones that are maximal with respect to inclusion. Since $\size{\bigcup_{S \subseteq \vars \varphi} S_\pi} \leq 3 \cdot 2^{\size{\vars \varphi}}$, there are at most $3 \cdot 2^{\size{\vars \varphi}}+1$ such sub-blocks having a total of at least $k - 3 \cdot 2^{\size{\vars \varphi}}$ elements. By the Pigeonhole Principle, since $k - 3 \cdot 2^{\size{\vars \varphi}} > 3 \cdot 2^{4\size{\vars \varphi} +3}+2^{3(\size{\vars \varphi}+1)} = (3 \cdot 2^{\size{\vars \varphi}}+1) \cdot 2^{3(\size{\vars \varphi}+1)}$, this means that there must be a sub-block with at least $2^{3(\size{\vars \varphi}+1)}+1$ elements; suppose it is 
\[B'=([i+\ell,i+\ell+k'] \times [i'+\ell,i'+\ell+k']) \subseteq B,\] 
where $k' > 2^{3(\size{\vars\varphi} +1)}$ and $\ell+k' \leq k$. There must be two distinct elements $(\rowa,\cola),(\rowa',\cola') \in \pi \cap B'$ with 
\begin{align*}
  \begin{split}
    \sigma(\rowa,\cola) &= \sigma(\rowa',\cola')\text{,} \\
    \sigma(\rowa-1,\nth) &= \sigma(\rowa'-1,\nth)\text{, and} \\
    \sigma(\rowa+1,\nth)&=\sigma(\rowa'+1,\nth).
  \end{split}
\tag{$\dag$}
\label{eq:three-elems-eq}
\end{align*}
 Without any loss of generality, suppose that $\rowa<\rowa'$ and $\cola<\cola'$ (the same argument works if $\rowa<\rowa'$ and $\cola'<\cola$). Let $\pi' = \pi \setminus ([\rowa+1,\rowa'] \times [\cola+1,\cola'])$. It is easy to verify  that $(\pi',\sigma') \models \varphi$, where $\sigma' = \sigma|_{\pi'}$.
\begin{claim}\label{cl:smallblocks}
  $(\pi',\sigma') \models \varphi$
\end{claim}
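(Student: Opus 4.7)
The plan is to verify each conjunct of $\varphi = \forall x \forall y .\,\chi \,\land\, \bigwedge_i \forall x \exists y .\,\psi_i$ holds in $(\pi',\sigma')$, exploiting that the deletion of $[r+1,r']\times[c+1,c']$ perturbs the FO$^2$ structure only mildly. The first step is the structural observation that, for any two preserved elements $(a,b),(a',b')\in\pi'$, we have $\ntype{\pi'}{(a,b)}{(a',b')}=\ntype{\pi}{(a,b)}{(a',b')}$ \emph{except} for the single pair $\set{(r,c),(r'+1,c'+1)}$, whose neighborhood type flips from $\infty$ (in $\pi$) to $\searrow$ (in $\pi'$). Indeed, the only rows (resp.\ columns) rendered consecutive by the deletion are $r$ and $r'+1$ (resp.\ $c$ and $c'+1$), and since $B$ is a $\searrow$-block the unique $\pi$-elements sitting in those rows/columns are precisely $(r,c)$ and $(r'+1,c'+1)$.

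For the universal conjunct, every pair of preserved elements whose neighborhood type is unchanged inherits satisfaction of $\chi$ directly from $(\pi,\sigma)\models\forall x\forall y.\,\chi$, since valuations are preserved by $\sigma'=\sigma|_{\pi'}$. For the exceptional pair, $(\dag)$ supplies $\sigma(r'+1,c'+1)=\sigma(r+1,c+1)$, so the 2-type of $((r,c),(r'+1,c'+1))$ in $(\pi',\sigma')$ coincides with the 2-type of $((r,c),(r+1,c+1))$ in $(\pi,\sigma)$; the latter satisfies $\chi$, so the former does too. The reversed instantiation is handled symmetrically.

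For each existential conjunct $\forall x\exists y.\,\psi_i$ and each $(a,b)\in\pi'$, I would fix a $\pi$-witness $(a^*,b^*)$ and attempt to transfer it to $\pi'$. Three routine sub-cases dispose of most of the work: the witness is preserved and its 2-type is unchanged; the witness was a deleted row/column neighbor of $(a,b)$, which forces $(a,b)\in\set{(r,c),(r'+1,c'+1)}$ and yields via $(\dag)$ an equivalent $\pi'$-neighbor across the gap carrying the same valuation and the same directional type; or $\set{(a,b),(a^*,b^*)}=\set{(r,c),(r'+1,c'+1)}$ with the $\infty\mapsto\searrow$ flip, in which case the $\searrow$-pair $((r,c),(r+1,c+1))$ of $\pi$ already realizes the 2-type required at the exceptional pair in $\pi'$, as delivered by $(\dag)$.

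The genuinely delicate case --- and the main obstacle --- is when $(a^*,b^*)\in[r+1,r']\times[c+1,c']$ lies strictly inside the deleted block with $\ntype{\pi}{(a,b)}{(a^*,b^*)}=\infty$. Setting $v=\sigma(a^*,b^*)$, the disjointness of $B'$ from $\bigcup_{S\subseteq\vars\varphi}S_\pi$ together with $(a^*,b^*)\in B'$ forces $v$ to have at least three further occurrences in $B\setminus B'$, all preserved in $\pi'$. Since $(a,b)$ has only boundedly many row/column neighbors in $\pi'$ (for this sub-case, the same ones as in $\pi$), a pigeonhole shows that at least one of those reserved copies is at $\infty$-type with $(a,b)$ in $\pi'$, furnishing a replacement witness with the same 2-type as $(a^*,b^*)$ originally had. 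The constant $3$ in the definition of $S_\pi$, together with the strict nesting $B'\subsetneq B$, is calibrated precisely to make this bookkeeping go through; combining all the sub-cases then yields $(\pi',\sigma')\models\varphi$, which is the statement of the claim and feeds the contradiction with the minimality of $(\pi,\sigma)$ in the enclosing proof.
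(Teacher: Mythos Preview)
Your proposal is correct and follows essentially the same approach as the paper's own proof: both handle the universal conjunct by observing that neighborhood types are preserved except at the single pair $\{(r,c),(r'+1,c'+1)\}$ and dispatch that pair via $(\dag)$, and both handle the existential conjuncts by transferring witnesses, using $(\dag)$ for the local (non-$\infty$) cases and the reserved elements of $S_\pi$ for deleted $\infty$-witnesses. Your case breakdown for the existential part is more explicit than the paper's rather terse argument, but the underlying ideas coincide.
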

\begin{proof}%[Proof of Claim~\ref{cl:smallblocks}]
Let $\varphi = \forall x \forall y . \chi ~\land~ \bigwedge_j \forall x \exists y . \psi_j$. 
 For any $(\rowb,\colb), (\rowb',\colb') \in \pi$ such that $(\rowb,\colb) \in \pi'$ and $(\pi,\sigma) \models_{[x \mapsto (\rowb,\colb), y \mapsto (\rowb',\colb')]} \psi_j$, we can find $(\rowb'',\colb'') \in \pi'$ such that $\sigma(\rowb',\colb') = \sigma'(\rowb'',\colb'')$ and such that 
$\ntype{\pi'}{(\rowb'',\colb'')}{(\rowb,\colb)}=
\ntype{\pi}{(\rowb',\colb')}{(\rowb,\colb)}$. This is because, if the neighborhood type is not $\infty$ then it is taken care of by \eqref{eq:three-elems-eq}. Otherwise, it is witnessed by one of the elements from $S_\pi$, for $\sigma(\rowb',\colb')=S$. Therefore, $(\pi',\sigma')$ verifies $\forall x \exists y . \psi_j$ for every $j$.

On the other hand, if $(\rowb, \colb), (\rowb',\colb') \in \pi'$ where $(\pi,\sigma) \models_{[x \mapsto (\rowb, \colb), y \mapsto (\rowb', \colb')]} \chi$, we have that, if $\ntype{\pi'}{(\rowb,\colb)}{(\rowb',\colb')} = 
\ntype{\pi}{(\rowb,\colb)}{(\rowb',\colb')}$, then $(\pi',\sigma') \models_{[ x \mapsto (\rowb,\colb), y \mapsto (\rowb',\colb') ]} \chi$. Otherwise, we necessarily have that $(\rowb,\colb) = (\rowa,\cola)$, $(\rowb',\colb')=(\rowa'+1,\cola'+1)$ (or vice-versa). But since $\sigma(\rowa+1,\cola+1) = \sigma(\rowa'+1,\cola'+1)$ and $(\pi,\sigma) \models_{[ x \mapsto (\rowa,\cola), y \mapsto (\rowa+1, \cola+1) ]} \chi$, then $(\pi',\sigma') \models_{[ x \mapsto (\rowb,\colb), y \mapsto (\rowb', \colb') ]} \chi$, because $\ntype{\pi'}{(\rowb,\colb)}{(\rowb',\colb')} = 
\ntype{\pi}{(\rowa,\cola)}{(\rowa+1,\cola+1)} = {\searrow}$.
\end{proof}

Since $\pi'$ is smaller than $\pi$, and $(\pi',\sigma')$ verifies $\varphi$, it cannot be that $\pi$ is minimal in size, which is in contradiction with our hypothesis. Therefore, $\pi$ does not have blocks of size bigger than $3 \cdot 2^{4\size{\vars \varphi} +3}+2^{3(\size{\vars \varphi}+1)} + 3 \cdot 2^{\size{\vars \varphi}}$. 
\end{proof}

\begin{proof}[Proof of Proposition~\ref{prop:few-fingerprints}]
Let $(\pi,\sigma) \models \varphi$ be such that $(\pi,\sigma)$ is minimal in size. Without any loss of generality we assume that $\sigma : \pi \to 2^{\vars \varphi}$.
 We show how to build another minimal valued permutation from $(\pi,\sigma)$ with a number block fingerprints that is bounded by an exponential function on $|\varphi|$.

Let $\pi = B_1 \cup \dotsb \cup B_N$, where each $B_i$ is a maximal block. We are going to mark some blocks (only exponentially many), and use only these  to build a new valued permutation satisfying $\varphi$. For each $S \subseteq \vars \varphi$, let $S_\pi \subseteq \pi$ be a set of 4 elements of $\set{(\rowa,\cola) \in \pi \mid \sigma(\rowa,\cola) = S}$ ---if the set has less than 4 elements, then let $S_\pi$ be all of them. For every $S \subseteq \vars \varphi$ and every $(\rowa,\cola) \in S_\pi$, mark the block $B_k$ such that $(\rowa,\cola) \in B_k$ with a color red. It follows that we end up with at most $4 \cdot 2^{\size{\vars \varphi}}$ blocks marked with red. On the other hand, for every $i\in [N]$, let us define $h_i = (d,b^\suca_{-1},b^\suca_{+1},b^\sucb_{-1},b^\sucb_{+1})$, where
 $\xi(B_i) = (d, b^\suca_{-1},b^\suca_{+1},b^\sucb_{-1},b^\sucb_{+1}, a_0\dotsb a_{t})$.
For each $h \in \set{ h_i \mid i \in [N] \text{ s.t.\ $B_i$ is not marked red}}$ let us mark with color green one block $B_i$ such that $h_i = h$ and such that $B_i$ is not marked red. Since there are no more than $3 \cdot 2^{4 \size{\vars \varphi}}$ different $h_i$'s, there are only an exponential number of blocks marked with colors red or green.

Now, let $B_m$ be a block with a fingerprint different from all the fingerprints of the marked blocks. There must be a block $B_n$ marked with green, so that $h_n = h_m$. Let $(\pi',\sigma')$ be the result of replacing of $B_m$ with $B_n$ in $(\pi,\sigma)$, in the expected way. 
We then have that $(\pi',\sigma') \models \varphi$.
\begin{claim}\label{cl:fewblocks}
  $(\pi',\sigma') \models \varphi$.
\end{claim}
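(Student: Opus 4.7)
The plan is to mimic the argument of Claim~\ref{cl:smallblocks}, decomposing $\varphi = \forall x \forall y.\, \chi \land \bigwedge_j \forall x \exists y .\, \psi_j$ into its Scott normal form conjuncts and verifying each one in $(\pi',\sigma')$ by exploiting two invariants of the construction: first, the equality $h_m = h_n$ guarantees that the outside-boundary valuations $b^\suca_{\pm 1}, b^\sucb_{\pm 1}$ faced by the replaced region are the same in $(\pi,\sigma)$ and in $(\pi',\sigma')$; and second, for every valuation $S$ the red-marked reserves $S_\pi$ (containing $\min(4,\size{S})$ representatives, none of them inside $B_m$) survive in $\pi'$.

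For the universal part $\forall x \forall y.\, \chi$, I would take an arbitrary pair of elements of $\pi'$ and split by their location relative to the new block $B_n$. A pair wholly outside $B_n$ is identical to one in $(\pi,\sigma)$, so $\chi$ holds. A pair both of whose elements lie inside $B_n$ corresponds, with the same neighbourhood type, to a pair in the original copy of $B_n$ in $\pi$, so $\chi$ holds there too. A pair that straddles the boundary of $B_n$ with small neighbourhood type (one of $\leftarrow,\rightarrow,\uparrow,\downarrow,\nearrow,\searrow,\nwarrow,\swarrow$) is handled by $h_m = h_n$, which shows that the corresponding pair in $(\pi,\sigma)$ using $B_m$ instead of $B_n$ has the same valuations and the same neighbourhood type, so $\chi$ transfers. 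For $\infty$-distance pairs, the same valuations occur in $(\pi,\sigma)$ at $\infty$-distance, so $\chi$ again transfers.

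For each existential conjunct $\forall x \exists y.\, \psi_j$ and each $(\rowb,\colb) \in \pi'$, I would produce a witness as follows. If $(\rowb,\colb)$ lies outside $B_n$, take its original witness $(\rowb',\colb')$ from $(\pi,\sigma)$: if it survives in $\pi'$ with its original neighbourhood type, we are done; otherwise $(\rowb',\colb')$ was inside $B_m$, and either the interaction was local, in which case $h_m = h_n$ provides the analogous element inside $B_n$, or the interaction was of type $\infty$, in which case any element of the red reserve $S_\pi$ with $S = \sigma(\rowb',\colb')$ that lies at $\infty$-distance from $(\rowb,\colb)$ serves as a witness. The symmetric argument applies when $(\rowb,\colb) \in B_n$, using the witnesses of the original copy of $B_n$ in $(\pi,\sigma)$ for internal and boundary interactions and the red reserves for $\infty$-distance interactions.

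The main obstacle to write out rigorously is the bookkeeping for the $\infty$-distance case: one needs at least one surviving element of each required valuation to remain $\infty$-far from every queried point. Since at most a bounded number of reserves of a given valuation can sit inside a fixed $\pm 1$-neighbourhood of the query, and $\size{S_\pi} \geq \min(4,\size{S})$, a suitable reserve always exists whenever $\varphi$ demanded such a witness in $(\pi,\sigma)$ to begin with. Combining the two cases yields $(\pi',\sigma') \models \forall x \forall y.\, \chi$ and $(\pi',\sigma') \models \forall x \exists y.\, \psi_j$ for every $j$, hence $(\pi',\sigma') \models \varphi$.
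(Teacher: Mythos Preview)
Your proposal is correct and follows essentially the same approach as the paper's proof: both decompose $\varphi$ into its Scott normal form, handle the universal and existential conjuncts by a case analysis on whether the relevant elements lie inside or outside the replaced block, use the equality $h_m = h_n$ to transfer local (non-$\infty$) neighbourhood interactions across the boundary, and fall back on the four red-marked representatives $S_\pi$ (which, being in marked blocks, are disjoint from $B_m$ and hence survive in $\pi'$) to supply $\infty$-distance witnesses. Your case split is slightly finer than the paper's (you separate ``both inside'', ``both outside'', and ``straddling'' explicitly, whereas the paper folds the first two of these into ``described inside the fingerprint'' versus ``not modified''), but the substance is identical.
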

\begin{proof}
Let $\varphi = \forall x \forall y. \chi ~\land~ \bigwedge_i \forall x \exists y . \psi_i$. 

We first show that $\forall x \forall y . \chi$ holds in $(\pi',\sigma')$.
Let $(\rowa,\cola), (\rowa',\cola')\in \pi'$.
 \begin{itemize*}
 \item If $(\rowa,\cola)$ is in the edited block of the permutation (\ie, in the new copy of the fingerprint of $B_n$).
   \begin{itemize*}
   \item If $\ntype{\pi}{(\rowa,\cola)}{(\rowa',\cola')} \neq \infty$, then both elements are described inside
     the fingerprint $\xi(B_n)$. Hence, there are two elements in the
     block $B_n$ of $(\pi,\sigma)$ with the same neighborhood type  and with the
     same valuations. Since for those two we have that $\chi$ holds,
     then it holds for $(\rowa,\cola), (\rowa',\cola')$ as well. Hence, $(\pi',\sigma')
     \models_{[x \mapsto (\rowa,\cola), y \mapsto (\rowa',\cola')]} \chi$.
   \item Otherwise, suppose $\ntype{\pi}{(\rowa,\cola)}{(\rowa',\cola')} = \infty$. Note that in the red colored blocks of
     $(\pi,\sigma)$ there must be at least 4 elements with value
     $\sigma'(\rowa,\cola)$. By the same reason there must be 4 elements with
     value $\sigma'(\rowa',\cola')$. Since marked blocks are preserved, there must be $(\rowb,l), (\rowb',\colb') \in
     \pi$ such that $\ntype{\pi}{(\rowb,\colb)}{(\rowb',\colb')}=\infty$, and $\sigma(\rowb,\colb) =
     \sigma'(\rowa,\cola)$, $\sigma(\rowb',\colb') = \sigma'(\rowa',\cola')$. Since
     $(\pi,\sigma) \models_{[x \mapsto (\rowb,\colb), y \mapsto (\rowb',\colb')]}
     \chi$, we have $(\pi',\sigma') \models_{[x \mapsto (\rowa,\cola), y
       \mapsto (\rowa',\cola')]} \chi$.
   \end{itemize*}
\item If both $(\rowa,\cola)$ and $(\rowa',\cola')$ are in the part that was not modified, then there must clearly be $(\rowb,\colb), (\rowb',\colb') \in \pi$ with 
$\ntype{\pi}{(\rowb,\colb)}{(\rowb',\colb')} =
\ntype{\pi'}{(\rowa,\cola)}{(\rowa',\cola')}$, so that $\sigma(\rowb,\colb) = \sigma'(\rowa,\cola)$, $\sigma(\rowb',\colb') = \sigma'(\rowa',\cola')$. Hence, 
$(\pi',\sigma') \models_{[x \mapsto (\rowa,\cola), y \mapsto (\rowa',\cola')]} \chi$.
 \end{itemize*}

We now show that for every $i$ and $(\rowa,\cola) \in \pi'$,  $\exists y . \varphi_i$ holds.
 \begin{itemize*}
 \item If $(\rowa,\cola)$ is in the edited block of the permutation (\ie, in the new copy of the fingerprint of $B_n$), then there must be some $(\rowa',\cola')$ in $B_n$ such that $\sigma(\rowa',\cola') = \sigma'(\rowa,\cola)$. Since $(\pi,\sigma) \models_{[x \mapsto (\rowa',\cola')]} \exists y . \varphi_i$, there must be some $(\rowb',\colb') \in \pi$ such that $(\pi,\sigma) \models_{[x \mapsto (\rowa',\cola'), y \mapsto (\rowb',\colb')]} \varphi_i$.
   \begin{itemize*}
   \item If $\ntype{\pi}{(\rowb',\colb')}{(\rowa',\cola')} \neq \infty$, then it is described in the fingerprint of $B_n$ and therefore there must be some $(\rowb,\colb)$ such that 
$\ntype{\pi'}{(\rowb,\colb)}{(\rowa,\cola)}=\ntype{\pi}{(\rowb',\colb')}{(\rowa',\cola')}$, and $\sigma'(\rowb,\colb) = \sigma(\rowb',\colb')$, $\sigma'(\rowa,\cola) = \sigma(\rowa',\cola')$. Then, $(\pi',\sigma') \models_{[x \mapsto (\rowa,\cola), y \mapsto (\rowb,\colb)]} \varphi_i$ since $(\pi,\sigma) \models_{[x \mapsto (\rowa',\cola'), y \mapsto (\rowb',\colb')]} \varphi_i$.

   \item Otherwise, suppose that 
$\ntype{\pi}{(\rowb',\colb')}{(\rowa',\cola')}=\infty$. There must be 4 elements of $(\pi,\sigma)$ with the same value $\sigma(\rowb',\colb')$ inside blocks marked red. Then, there must be a block marked red in $\pi'$ containing some $(\rowb,\colb)$ with $\sigma'(\rowb,\colb) = \sigma(\rowb',\colb')$ and such that 
$\ntype{\pi'}{(\rowb,\colb)}{(\rowa,\cola)}=\infty$. Then, $(\pi',\sigma') \models_{[x \mapsto (\rowa,\cola), y \mapsto (\rowb,\colb)]} \varphi_i$ since $(\pi,\sigma) \models_{[x \mapsto (\rowa',\cola'), y \mapsto (\rowb',\colb')]} \varphi_i$.
   \end{itemize*}
\item If $(\rowa,\cola)$ is in the part that was not modified, then there is some $(\rowa',\cola') \in \pi$ with $\sigma(\rowa',\cola') = \sigma'(\rowa,\cola)$ and $(\rowa',\cola') \not\in B_m$. Since $(\pi,\sigma) \models_{[x \mapsto (\rowa',\cola')]} \exists y . \varphi_i$, there must be some $(\rowb',\colb') \in \pi$ such that $(\pi,\sigma) \models_{[x \mapsto (\rowa',\cola'), y \mapsto (\rowb',\colb')]} \varphi_i$.
  \begin{itemize*}
  \item If 
$\ntype{\pi}{(\rowb',\colb')}{(\rowa',\cola')} \neq \infty$, then $(\rowb',\colb')$ is described in the fingerprint of the block where $(\rowa',\cola')$ belongs. Since this fingerprint is preserved (because it is not $B_m$), there must be some $(\rowb,\colb)$ with $\sigma'(\rowb,\colb) = \sigma(\rowb',\colb')$ and such that 
$\ntype{\pi'}{(\rowa,\cola)}{(\rowb,\colb)} =
\ntype{\pi}{(\rowa',\cola')}{(\rowb',\colb')}$. Then, $(\pi,\sigma) \models_{[x \mapsto (\rowa',\cola'), y \mapsto (\rowb',\colb')]} \varphi_i$.
  \item Otherwise, if 
$\ntype{\pi}{(\rowb',\colb')}{(\rowa',\cola')}=\infty$, there must be 4 other positions with the same value as $(\rowb',\colb')$ in $(\pi,\sigma)$, in blocks marked red. Then, there must necessarily be some $(\rowb,\colb) \in \pi'$ such that $\sigma'(\rowb,\colb) = \sigma(\rowb',\colb')$ and such that 
$\ntype{\pi'}{(\rowb,\colb)}{(\rowa,\cola)}=\infty$.
Then, $(\pi,\sigma) \models_{[x \mapsto (\rowa',\cola'), y \mapsto (\rowb',\colb')]} \varphi_i$.
  \end{itemize*}
 \end{itemize*}
Therefore, $(\pi',\sigma') \models_{[x \mapsto (\rowa,\cola)]} \exists y . \varphi_i$.
\end{proof}

Note that since $B_m$ and $B_n$ have the same type, then the number of maximal blocks in $\pi'$ is the same as in $\pi$, and moreover the number of maximal blocks with fingerprints different from the marked fingerprints is decremented by one.
Note that in particular this means that $\size{B_m}=\size{B_n}$. Otherwise, if $\size{B_m} > \size{B_n}$, we would have that $(\pi',\sigma')$ is smaller than $(\pi,\sigma)$ which cannot be since $(\pi,\sigma)$ is minimal in size. Conversely, if $\size{B_m} < \size{B_n}$, we could have marked $B_m$ with green instead of $B_n$ and we arrive to the same contradiction.

We can repeat this operation with all the unmarked blocks ending up with a valued permutation whose every block has the fingerprint of a marked block. Since there are only exponentially many marked blocks, there are exponentially many fingerprints.
\end{proof}

%\subsection*{Proof of Lemma~\ref{lem:consistency-property}}

\end{document}